\newcommand{\blind}{1}
\newtheorem{definition}{Definition}
\newtheorem{proposition}{Proposition}
\newenvironment{proof}[1][Proof]{\noindent\textbf{#1.} }{\ \rule{0.5em}{0.5em}}
\newcommand\indep{\perp\!\!\!\perp}
\newcommand*{\inlineequation}[2][]{%
  \begingroup
    \refstepcounter{equation}%
    \ifx\\#1\\%
    \else
      \label{#1}%
    \fi
    \relpenalty=10000 %
    \binoppenalty=10000 %
    \ensuremath{%
      #2%
    }%
    ~\@eqnnum
  \endgroup
}
\begin{document}

\def\spacingset#1{\renewcommand{\baselinestretch}%
{#1}\small\normalsize} \spacingset{1}


\date{} 

\if1\blind
{
  \title{\bf Effect Aliasing in Observational Studies}
  \author{Paul R. Rosenbaum \\
    Department of Statistics and Data Science\\Wharton School, University of Pennsylvania\\
    and \\
    Jos\'{e} R. Zubizarreta\thanks{
    The authors gratefully acknowledge funding from the Patient Centered Outcomes Research Initiative (PCORI, award ME-2022C1-
25648).}\hspace{.2cm}\\
    Departments of Health Care Policy, Biostatistics, and Statistics \\ Harvard University}
  \maketitle
} \fi

\if0\blind
{
  \bigskip
  \bigskip
  \bigskip
  \begin{center}
    {\LARGE\bf Effect Aliasing in Observational Studies}
\end{center}
  \medskip
} \fi

\bigskip
\begin{abstract}
In experimental design, aliasing of effects occurs in fractional factorial
experiments, where certain low order factorial effects are indistinguishable
from certain high order interactions: low order contrasts may be orthogonal
to one another, while their higher order interactions are aliased and not identified. \ In
observational studies, aliasing occurs when certain combinations of
covariates --- e.g., time period and various eligibility criteria for
treatment --- perfectly predict the treatment that an individual will
receive, so a covariate combination is aliased with a particular treatment.
\ In this situation, when a contrast among several groups is used to estimate a
treatment effect, collections of individuals defined by contrast weights may be balanced with respect to summaries of low-order interactions between covariates and treatments, but necessarily
not balanced with respect to summaries of high-order interactions between covariates and treatments. \ We develop a
 theory of aliasing in observational studies, illustrate that theory in
an observational study whose aliasing is more robust than conventional
difference-in-differences, and develop a new form of matching 
to construct balanced confounded factorial designs from observational data.
\end{abstract}

\noindent%
{\it Keywords:}  Aliased contrasts; Cardinality matching; Causal inference; Fractional factorial; Incomplete blocks.
\vfill

\newpage
\spacingset{1} 

\section{Introduction: Aliasing in Experiments and Observational Studies}
\spacingset{1.84}
\vspace{-.5cm}
\subsection{Experiments and Observational Studies}

In his discussion of the planning of observational studies, 
\citet[p.~236]{cochran1965planning}
wrote:

\begin{quotation}
\noindent \ldots to a large extent, workers in observational research have
tried to copy devices that have proved effective in controlled experiments
\ldots\ Dorn (1953) recommended that the planner of an observational study
always ask himself the question, `How would the study be conducted if it
were possible to do it by controlled experimentation?'
\end{quotation}

\noindent For instance, blocking and adjustments appear in both experiments
and observational studies. \ 
\citet[p.~237]{cochran1965planning}%
\ continued:

\begin{quotation}
\noindent There is, however, an important difference between the demands
made on blocking or adjustments in controlled experiments and in
observational studies. In controlled experiments, the skilful use of
randomization protects against most types of bias arising from disturbing
variables. \ Consequently, the function of blocking or adjustment is to
increase precision rather than to guard against bias. In observational
studies, in which no random assignment of subjects to comparison groups is
possible, blocking and adjustment take on the additional role of protecting
against bias. Indeed, this is often their primary role.
\end{quotation}

Here, in an observational study, we discuss, implement and illustrate two
other important tools from experimental design, namely deliberate aliasing
of effects and deliberate balanced confounding of effects with blocks to
produce smaller, more homogeneous blocks. \ We extract from a large
administrative record system a small observational study of 16,800 people in
which $2^{3}$ factor combinations are studied in 4200 balanced blocks of
size 4, with some unavoidable aliasing of effects, while balancing many
observed covariates and covariate interactions, including potential effect modifiers.

\subsection{Aliasing in the Simplest Fractional Factorial Experiment}

Aliasing is a concept from experimental design that also appears, for
different reasons, in some observational studies. \ A large factorial
experiment may be run in fractional replication: perhaps $K$ factors, each
at 2 levels, are studied in fewer than $2^{K}$ observations, by carefully
selecting $2^{K-p}$ of the $2^{K}$ possible factor combinations for use in
the experiment. \ The price of fractional replication is that certain
effects become indistinguishable or aliased. \ The art of designing a
fractional factorial experiment is to alias effects thought to be important
with effects thought to be negligible, often aliasing main effects of
factors with high-order interactions among factors, or aliasing certain
interactions with other interactions; see, for instance \cite%
{wu2011experiments}.

The simplest and smallest nontrivial example of a fractional factorial
experiment is a $\frac{1}{2}$-fraction of a $2\times 2\times 2=2^{3}$
experiment in factors $A$, $B$, $C$, observations $i=1,\ldots ,\,8$, in
which observations 1, 4, 6 and 7 are the $\frac{1}{2}$-fraction, as
displayed in Table 1.

The complete $2^{3}$ factorial can be run in the $2^{3}=8$ observations on
the left in Table 1, and the $\frac{1}{2}$-fractional factorial can be run
in the four observations on the right in Table 1. \ In the full $2^{3}$
factorial, there is one linear contrast in these eight observations that,
under Gauss-Markov assumptions for a linear model, is the best linear
unbiased estimate of each of the 7 factorial effects, namely the 3 main
effects, the 3 two-factor interactions, and the one three factor
interaction. \ For instance, the main effect of $A$ is estimated by $\frac{1%
}{4}\left( y_{1}+y_{2}+y_{3}+y_{4}-y_{5}-y_{6}-y_{7}-y_{8}\right) $ with
contrast weights $\frac{1}{4}\left( 1,1,1,1,-1,-1,-1,-1\right) $, while the $%
ABC$ interaction is estimated with contrast weights $\frac{1}{4}\left(
1,-1,-1,1,-1,1,1,-1\right) $. \ No one of these 7 effects is aliased with
another. \ The $\frac{1}{2}$-fraction still has contrasts for the three main
effects; for instance, it has contrast $\frac{1}{2}\left(
y_{1}+y_{4}-y_{6}-y_{7}\right) $ with weights $\frac{1}{2}\left(
1,1,-1,-1\right) $ for $A$. \ However, the contrast for the main effect of $%
C $ in the $\frac{1}{2}$-fractional factorial equals the contrast for the $%
AB $ interaction; that is, both of these two contrasts equals $\frac{1}{2}%
\left( 1,-1,-1,1\right) $ where the $C$ main effect is read from the $C$
column of the right side of Table 1, and the $AB$ contrast is the product of
its $A$ and $B$ columns. \ We can estimate $\frac{1}{2}\mathrm{E}\left(
y_{1}-y_{4}-y_{6}+y_{7}\right) $ from the fractional factorial, but we
cannot tell whether we have estimated a main effect of $C$ or an interaction
of $A$ and $B$ or some combination of the two; so, $C$ is aliased with $AB$.

There is an elegant theory of aliasing in fractional factorials; see, for
instance, \cite{mukerjee2007modern}. \ In thinking about aliasing in
observational studies, three aspects of the theory of fractional factorials
deserve mention, namely contrasts versus models, unbalanced replication, and
covariates.
See, for example, \cite{blackwell2023noncompliance}, \cite{branson2016improving}, and \cite{dasgupta2015causal} for recent causal analyses of factorial designs.

A strength of the modern theory of fractional factorial designs is that it
is framed in terms of contrasts or weighted combinations of observations and
their ambiguities, not in terms of absolute faith in some linear model
picked a priori. \ Consequently, the alias structure of a fractional
factorial refers to specific ambiguities of specific contrasts, not to
whether a model is identified as a whole. \ In a fractional factorial larger
than Table 1 --- specifically, in a resolution IV design --- main effects
may be free of aliasing with 2-factor interactions, yet 2-factor
interactions may be aliased with each other; so, a model with all main
effects and 2-factor interactions is not identified, yet main effects are
identified, meaning that the model is interpretable on several levels, even
though it is not identified as a model. \ This situation arises in
observational studies also, even though it is rarely recognized as such when
it occurs.

The aliasing of contrasts that occurs in observational studies is forced
upon the investigator by the way some law or policy makes individuals
eligible or ineligible for treatment; so, the aliasing may lack the simple
and immediate form associated with some orthogonal experimental designs. \
Helpful in this regard is paragraph 4a.2(iv) in \cite{rao1973linear}: it
gives a necessary and sufficient condition for the parameter of a linear
model to be estimable --- i.e., not aliased --- and then gives the least
squares estimator as a contrast in the observations, even when the design
matrix may include other parameters or contrasts that are not estimable or
aliased. \ By adding columns to a design matrix and doing one step of linear
algebra, the alias structure and contrast are calculated. \ In observational
studies, we are working out the alias structure that has been forced upon
us, not choosing one alias structure in preference to another.

Fractional factorials are usually run to reduce the size of a costly
experiment, so it would be extremely uncommon to replicate the half fraction
on the right in Table 1, rather than run and perhaps replicate the full
fraction on the left in Table 1. \ In observational studies, aliasing is
typically created by public policies that define who is eligible for the
treatment; so, large, heavily replicated studies with basic aliasing are
common. \ If the two designs in Table 1 had treatment patterns that were
replicated, perhaps in an unbalanced allocation to treatment combinations,
then the contrast weights that attach to individual observations would
change (\cite{chattopadhyay2023implied}), but in the half fraction the
aliasing of the $C$ main effect with the $AB$ interaction would remain.

Adjustment for measured covariates plays a large role in observational
studies, and a smaller role in most fractional factorial designs. \ To
remove bias, an observational study must adjust for measured covariates that
predict both treatment assignment and outcomes. \ Typically, these
covariates are associated with treatment but do not rigidly determine
treatment; so, they are usually not aliased with treatment. \ Effect
modification is an interaction between a treatment and a covariate: the
magnitude or stability of the treatment effect varies predictably with the
level of the covariate. \ If a covariate $x_{k}$ is not aliased with any
effect, then effect modification between $x_{k}$ and effects is aliased in
the same way that the effects themselves are aliased.

\subsection{Aliasing in the Simplest Observational Study}

The simplest nontrivial form of aliasing in an observational study occurs
when a single policy or treatment is offered to individuals who meet certain
eligibility requirements after a certain start date, but the treatment is
not offered to similar individuals who would have met the same eligibility
requirements before the start date, and is not offered to dissimilar
ineligible individuals before or after the start date. \ There are four
cells defined by time and eligibility requirements, where only the
eligible-after cell is given treatment, with the design matrix in Table 2,
where $\mathbf{h}$ gives the weights for a treatment
effect estimate, here the least squares weights in a balanced design, which are the usual so-called \textquotedblleft
difference-in-differences\textquotedblright\ weights. \ In this case, the
main-effect contrasts for time, eligibility and treatment are not aliased,
but addition of a the time-by-eligibility interaction contrast introduces
aliasing with the treatment contrast estimate, $\mathbf{h}$. \ Various ways of coding this design matrix yield the same difference-in-difference contrast $\mathbf{h}$. \ This very
basic design has several flavors and a wide variety of names, including the
before-and-after nonequivalent control group design and the method of
difference-in-differences; see, for instance, \cite{cook2002experimental}, 
\cite{meyer1995natural}, \cite{shaikh2021randomization} and \cite%
{ye2023negative}. \ Studies using this approach include, for example, \cite%
{card2000minimum} and \cite{meyer1995workers}.

The example used here from \cite{lalive2006changes} is often described as a
\textquotedblleft difference-in-differences study,\textquotedblright\ but in
fact it has a more complex, more robust alias structure.

\section{An Observational Study of Unemployment Benefits}
\vspace{-.5cm}
\subsection{Treatments Aliased with Eligibility $\times $ Time Interactions}

In August of 1989, Austria made changes to its unemployment benefits,
increasing the monetary benefit amount (the replacement rate) for workers
whose previous jobs had low earnings (LE), and increasing the duration of
benefits for older workers with a long period of no or infrequent
unemployment (IU). \ An increase in benefits duration is denoted B and no
increase is b, while an increase in the replacement rate is R and no
increase is r. \ \cite{lalive2006changes} studied the effects of these
changes in unemployment benefits on the duration of unemployment. \ They
studied people who became unemployed in the two years before August 1989 and
other people who became unemployed in the two years after August 1989. \ A
textbook presentation of their study is given by 
\citet[Ch.~5]{cahuc2014labor}%
.

We focus on the subset of individuals in their study who were aged 40-55, as
no individual under 40 was eligible for a lengthened duration of
unemployment benefits. \ The categories, LE and IU, exist before and after
August 1989, but they determine B-or-b and R-or-r only after August 1989. \
After August 1989, IU determines B-or-b, and LE determines R-or-r. \ Unlike
a conventional difference in difference design, an interaction between Low
Earnings (LE) and Time does not lead to aliasing of the B-or-b effect, and
an interaction between Infrequent Unemployment (IU) and Time does not lead
to aliasing of the R-or-r effect.

There are four treatment types:
\vspace{-.5em}
\begin{itemize}
\setlength\itemsep{-.5em}
\item $\mathrm{BR}$ for an increase in benefit amount and duration,

\item $\mathrm{Br}$ for an increase in benefit duration only,

\item $\mathrm{bR}$ for an increase in benefit amount only,

\item $\mathrm{br}$ for no increase in benefits.
\end{itemize}

Before August 1989, these four eligibility categories exist, but the changes
had not gone into effect; so the categories are denoted $\overline{\mathrm{BR%
}}$, $\overline{\mathrm{Br}}$, $\overline{\mathrm{bR}}$, $\overline{\mathrm{%
br}}$ in the period before August 1989. \ The alias structure is determined
by the design matrix in Table 3.

Although the eligibility categories, LE and IU, exist before August 1989
when TIME $=-1$, the treatments B/b and R/r were not in operation, so Bb and
Rr are 0. \ Because Table 3 is of full column rank, all of the linear model
parameters for this design matrix Table 3 are all jointly estimable \cite[%
4a.2(iii)]{rao1973linear}, as is true in a conventional
difference-in-differences analysis. \ The least squares contrast weights $%
\mathbf{h}$ in Table 3 that estimate the $B/b$ effect are the usual
difference-in-difference contrast for $B/b$.

If we add the $\mathrm{LE\times TIME}$ interaction as a new column to the
design matrix, then $\mathrm{LE\times TIME=}\left( 2\times \mathrm{R/r}%
\right) \mathrm{-LE}$; so, there is a linear dependence and the model is no
longer identified (no longer of full column rank). \ However, the $\mathrm{%
B/b}$ effect is still estimable using the same contrast $\mathbf{h}$. \
Although adding an $\mathrm{LE\times TIME}$ interaction results in an
unidentified model with some aliasing, this added aliasing does not involve
the contrast $\mathbf{h}$ for the main effect of benefits duration.

In the previous argument, the roles of $\mathrm{B/b}$ and $\mathrm{R/r}$ can
be reversed if $\mathrm{IU\times TIME}$ is added to Table 3 instead of $%
\mathrm{LE\times TIME}$.

Several further subtleties are shown in Web-appendix Table 1. The benefits
available prior to August 1989 depended upon whether one is less than 50
years old and whether one worked for at least 3 of the prior 5 years. \
These two covariates do not determine eligibility for a benefits change in
August 1989 --- for instance, in every group $g$ in Table 3, there are
people under and over 50 years old --- but the two covariates affect the
magnitude of the change from before August 1989, so they are possible effect
modifiers. \ It is, therefore, important to balance these two covariates; see \S\ref{ssMoreComplexA}.

\subsection{Eight Treatments in Blocks of Size 4}

As seen in Table 4, the blocked study contains 16800 individuals in 4200
blocks of size 4, where $16800=4200\times 4$. \ The matching that formed the
blocks of size 4 is described in \S \ref{secMatching}. There are 6 types of
blocks, with 700 blocks of each type, or $2800=700\times 4$ individuals in
each block type. \ Each block contains two eligibility categories, before
and after two treatments were given to those two eligibility categories, and
consequently two of the four treatments, BR, Br, bR, and br. \ There are six
block types because $\binom{4}{2}=6$. \ This design is a particular type of
a confounded factorial in incomplete blocks \cite[\S 4.15]{wu2011experiments}%
.

The difference-in-difference contrast estimates a different comparison of
treatments in each block type, and the aliasing is different in each block
type. \ For example, block type 2 contains BR and bR, so it provides 700
within-block difference-in-differences that each estimate the effect of the
change in benefits duration, B-vs-b, at the high level R of the change in
the replacement rate R/r. \ In contrast, block 5 contains Br and br, so it
provides 700 difference-in-differences that estimate the effect of the
change in benefits duration, B-vs-b, at the low level r of the change in the
replacement rate R/r.

\section{Basic Theory of Aliasing in Observational Studies}
\vspace{-.5cm}
\subsection{Ignorable Treatment Assignment with Aliasing}

There are $G$ treatment groups, $g=1,\ldots ,G$, where $\mathbf{Z}=\left(
Z_{1},\ldots ,Z_{G}\right) $ indicates the group to which an individual is
assigned, with $Z_{g}=1$ if the individual is in group $g$, and $Z_{g}=0$
otherwise. \ Because each individual is assigned to one group, $%
1=\sum_{g=1}^{G}Z_{g}$. \ In Table 3, $G=8$. \ An individual has observed
unaliased covariates $\mathbf{x}$ and aliased eligibility covariates $%
\mathbf{w}$. \ In Table 3, $\mathbf{w}=\left( w^{\prime },w^{\prime \prime
},w^{\prime \prime \prime }\right) $ where $w^{\prime }$ is low income (LE), 
$w^{\prime \prime }$ is infrequent unemployment (IU) and $w^{\prime \prime
\prime }$ is time, before or after the change in policy. \ Group membership $%
\mathbf{Z}$ is determined by $\mathbf{w}$, and conversely everyone with $%
Z_{g}=1$ has the same value, say $\mathbf{w}_{g}$, of the eligibility
covariates; i.e., $\Pr \left( \left. Z_{g}=1\,\right\vert \,\mathbf{x},\,%
\mathbf{w=w}_{g}\right) =1$ and $\Pr \left( \left. Z_{g}=1\,\right\vert \,%
\mathbf{x},\,\mathbf{w\neq w}_{g}\right) =0$ for $g=1,\ldots ,G$.

An individual has response $r_{g}$ if given treatment $g$, but the effect $%
r_{g}-r_{g^{\prime }}$ caused by receiving treatment $g$ in lieu of
treatment $g^{\prime }$ is not observed for any individual. \ The observed
response is $R=\sum_{g=1}^{G}Z_{g}\,r_{g}$.

\begin{definition}
\label{DefIwithA} Treatment assignment is ignorable given covariates $%
\mathbf{x}$ with aliasing by functions $\beta \left( \mathbf{x},\,\mathbf{w}%
\right) $ if\begin{equation}
\mathbf{Z}\left. \,\indep\,\left\{
r_{1}-\beta \left( \mathbf{x},\,\mathbf{w}_{1}\right) ,\,\ldots
,\,r_{G}-\beta \left( \mathbf{x},\,\mathbf{w}_{G}\right) \right\}
\,\,\right\vert \,\mathbf{x}  \label{eqIwithAci}
\end{equation}%
and%
\begin{equation}
0<\Pr \left( \left. Z_{g}=1\,\right\vert \,\mathbf{x}\right) <1\text{ for }%
g=1,\ldots ,G\text{.}  \label{eqIwithApositive}
\end{equation}\end{definition}

In Definition \ref{DefIwithA}, $\Pr \left( \left. Z_{g}=1\,\right\vert \,%
\mathbf{x},\,\mathbf{w}\right) $ is zero or one, but (\ref{eqIwithApositive}%
) speaks about $\Pr \left( \left. Z_{g}=1\,\right\vert \,\mathbf{x}\right) $%
, not $\Pr \left( \left. Z_{g}=1\,\right\vert \,\mathbf{x},\,\mathbf{w}%
\right) $.

If (\ref{eqIwithApositive}) is true, then the function $\mathrm{E}\left(
\left. R\,\right\vert \,Z_{g}=1,\,\mathbf{x}\right) $ is estimable as it
involves only quantities that are jointly observed: it is the regression of $%
R$ on $\mathbf{x}$\ in the nonempty subpopulation with $Z_{g}=1$. \ If $%
0=\beta \left( \mathbf{x},\,\mathbf{w}_{1}\right) =\cdots =\beta \left( 
\mathbf{x},\,\mathbf{w}_{G}\right) $, then Definition \ref{DefIwithA} is the
same as ignorable treatment assignment given $\mathbf{x}$ alone in \cite%
{rosenbaum2023propensity}; so, in this special case the
expectation $\mathrm{E}\left( \left. r_{g}-r_{g^{\prime }}\,\right\vert \,%
\mathbf{x}\right) $ of the causal effect, $r_{g}-r_{g^{\prime }}$, can be
estimated from observed quantities,%
\begin{eqnarray}
\mathrm{E}\left( \left. r_{g}-r_{g^{\prime }}\,\right\vert \,\mathbf{x}%
\right) &=&\mathrm{E}\left( \left. r_{g}\,\right\vert \,Z_{g}=1,\,\mathbf{x}%
\right) -\mathrm{E}\left( \left. r_{g^{\prime }}\,\right\vert \,Z_{g^{\prime
}}=1,\,\mathbf{x}\right)  \label{eqExpectedggprimeEffect} \\
&=&\mathrm{E}\left( \left. R\,\right\vert \,Z_{g}=1,\,\mathbf{x}\right) -%
\mathrm{E}\left( \left. R\,\right\vert \,Z_{g^{\prime }}=1,\,\mathbf{x}%
\right) \text{.}  \notag
\end{eqnarray}%
Of course, (\ref{eqExpectedggprimeEffect}) used $0=\beta \left( \mathbf{x},\,%
\mathbf{w}_{1}\right) =\cdots =\beta \left( \mathbf{x},\,\mathbf{w}%
_{G}\right) $ and does not otherwise follow from (\ref{eqIwithAci}). \ In
brief, Definition \ref{DefIwithA} states a weaker condition than ignorable
treatment assignment given $\mathbf{x}$, in the sense that the value of $%
\mathbf{Z}$ is a 1-to-1 function of $\mathbf{w}$. \ It encompasses forms of equi-confounding in the assignment mechanism, such as the common assumption of parallel trends in difference-in-differences, among other examples \citep{ding2019bracketing}.

Let $h_{g}$ be specific constants where $0=\sum_{g=1}^{G}h_{g}$ and at least
one $h_{g}$ is not zero. \ \cite{mukerjee2018using} discuss contrasts for
causal effects in randomized experiments. \ Suppose that we wish to estimate
the expectation of a specific causal contrast,$\,\mathrm{E}\left( \left.
\sum_{g=1}^{G}h_{g}\,r_{g}\,\right\vert \,\mathbf{x}\right)
=\sum_{g=1}^{G}h_{g}\,\mathrm{E}\left( \left. r_{g}\,\right\vert \,\mathbf{x}%
\right) $. \ In (\ref{eqExpectedggprimeEffect}), for instance, $\mathrm{E}%
\left( \left. r_{g}-r_{g^{\prime }}\,\right\vert \,\mathbf{x}\right)
=\sum_{g=1}^{G}h_{g}\,\mathrm{E}\left( \left. r_{g}\,\right\vert \,\mathbf{x}%
\right) $ with $h_{g}=1=-h_{g^{\prime }}$ and $h_{g^{\prime \prime }}=0$ for 
$g^{\prime \prime }\notin \left\{ g,\,g^{\prime }\right\} $. \
Alternatively, $h_{g}$ might be given by $\mathbf{h}=\left( h_{1},\ldots
,h_{8}\right) $ in Table 3.

\begin{definition}
\label{DefNotAliasedContrast} Contrast $h_{g}$, $g=1,\ldots ,G$, is not
aliased with $\beta \left( \mathbf{x},\,\mathbf{w}\right) $ if $%
\sum_{g=1}^{G}h_{g}\,\beta \left( \mathbf{x},\,\mathbf{w}_{g}\right) =0$ for
all $\mathbf{x}$.
\end{definition}

Typically, as in a fractional factorial experiment, an assumption that $%
\sum_{g=1}^{G}h_{g}\,\beta \left( \mathbf{x},\,\mathbf{w}_{g}\right) =0$ is
based on an assumption that $\beta \left( \mathbf{x},\,\mathbf{w}\right) $
has a particular simple form, not by assuming a specific function $\beta
\left( \mathbf{x},\,\mathbf{w}\right) $. \ For example, define $h_{g}$ by $%
\mathbf{h}=\left( h_{1},\ldots ,h_{8}\right) $ in Table 3. \ For $\mathbf{w}%
=\left( w^{\prime },w^{\prime \prime },w^{\prime \prime \prime }\right) $ in
Table 3 and 
\begin{equation}
\beta \left( \mathbf{x},\,\mathbf{w}\right) =\xi \left( \mathbf{x}%
,\,w^{\prime },\,w^{\prime \prime }\right) +\eta \left( \mathbf{x}%
,\,w^{\prime },\,w^{\prime \prime \prime }\right)   \label{eqSpecialAlias}
\end{equation}%
for any functions $\xi \left( \mathbf{x},\,w^{\prime },\,w^{\prime \prime
}\right) $ and $\eta \left( \mathbf{x},\,w^{\prime },\,w^{\prime \prime
\prime }\right) $, then $0=\sum_{g=1}^{G}h_{g}\,\beta \left( \mathbf{x},\,%
\mathbf{w}_{g}\right) $ in Table 3\ and contrast $h_{g}$, $g=1,\ldots ,G$,
is not aliased with $\beta \left( \mathbf{x},\,\mathbf{w}\right) $. \ For
instance, the contrast used to estimate the main effect of benefit duration is
not aliased with the main effect of infrequent unemployment (IU or $%
w^{\prime \prime }$), nor with the main effects and two-factor interaction
of low earnings (LE or $w^{\prime }$) and time (or $w^{\prime \prime \prime }
$). \ Note that (\ref{eqSpecialAlias}) permits any possible interaction
between $\mathbf{x}$, $w^{\prime }$ and$\,w^{\prime \prime }$ in $\xi \left( 
\mathbf{x},\,w^{\prime },\,w^{\prime \prime }\right) $, and any possible
interaction  $\mathbf{x},\,w^{\prime }$, and$\,w^{\prime \prime \prime }$ in 
$\eta \left( \mathbf{x},\,w^{\prime },\,w^{\prime \prime \prime }\right) $;
however, it precludes an interaction between, for example, $w^{\prime \prime
}$ and $w^{\prime \prime \prime }$. \ Because $R/r$ is determined by $\left(
w^{\prime },\,w^{\prime \prime \prime }\right) $, the same contrast $\mathbf{%
h}$ in Table 3 is not aliased with (R/r, LE and Time) jointly ---
essentially, R/r is a function of $\left( w^{\prime },\,w^{\prime \prime
\prime }\right) $ and so is redundant in $\eta \left( \mathbf{x},\,w^{\prime
},\,w^{\prime \prime \prime }\right) $.

\begin{proposition}
\label{propEstimateContrast} If treatment assignment is ignorable given
covariates $\mathbf{x}$ with aliasing by functions $\beta \left( \mathbf{x}%
,\,\mathbf{w}\right) $, and if contrast $h_{g}$, $g=1,\ldots ,G$, is not
aliased with $\beta \left( \mathbf{x},\,\mathbf{w}\right) $, then the
expected causal contrast $\mathrm{E}\left( \left.
\sum_{g=1}^{G}h_{g}\,r_{g}\,\right\vert \,\mathbf{x}\right) $ equals a
contrast of estimable functions
\begin{equation*}
\mathrm{E}\left( \left. \sum_{g=1}^{G}h_{g}\,r_{g}\,\right\vert \,\mathbf{x}%
\right) =\sum_{g=1}^{G}h_{g}\,\mathrm{E}\left( \left. R\,\right\vert
\,Z_{g}=1,\,\mathbf{x}\right) \text{.}
\end{equation*}
\end{proposition}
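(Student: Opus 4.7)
The plan is to reduce the claim to three linked moves: identify $\mathrm{E}(R\mid Z_{g}=1,\mathbf{x})$ with a conditional mean of the single potential outcome $r_{g}$, apply the ignorability-with-aliasing assumption to the residual $r_{g}-\beta(\mathbf{x},\mathbf{w}_{g})$, and then take the $h_{g}$-weighted sum so that the remaining bias terms are erased by the non-aliasing identity of Definition~\ref{DefNotAliasedContrast}.

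First, I would use $R=\sum_{h}Z_{h}\,r_{h}$ together with the mutual exclusivity of the indicators $Z_{1},\ldots,Z_{G}$. On the event $\{Z_{g}=1\}$ this collapses to $R=r_{g}$, so $\mathrm{E}(R\mid Z_{g}=1,\mathbf{x})=\mathrm{E}(r_{g}\mid Z_{g}=1,\mathbf{x})$; the positivity clause (\ref{eqIwithApositive}) guarantees that this conditional expectation is well defined for every $g$.

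Next, I would remove the conditioning on $Z_{g}$ by splitting $r_{g}=[r_{g}-\beta(\mathbf{x},\mathbf{w}_{g})]+\beta(\mathbf{x},\mathbf{w}_{g})$. For each fixed $g$, $\mathbf{w}_{g}$ is a constant vector, so $\beta(\mathbf{x},\mathbf{w}_{g})$ is a function of $\mathbf{x}$ alone and can be pulled out of any conditional expectation given $\mathbf{x}$. The joint conditional independence (\ref{eqIwithAci}) then yields $\mathrm{E}(r_{g}-\beta(\mathbf{x},\mathbf{w}_{g})\mid Z_{g}=1,\mathbf{x})=\mathrm{E}(r_{g}-\beta(\mathbf{x},\mathbf{w}_{g})\mid\mathbf{x})$, and adding $\beta(\mathbf{x},\mathbf{w}_{g})$ back to both sides produces $\mathrm{E}(R\mid Z_{g}=1,\mathbf{x})=\mathrm{E}(r_{g}-\beta(\mathbf{x},\mathbf{w}_{g})\mid\mathbf{x})+\beta(\mathbf{x},\mathbf{w}_{g})$.

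Finally, I would multiply by $h_{g}$ and sum across $g$. Linearity of expectation together with the $\sigma(\mathbf{x})$-measurability of $\beta(\mathbf{x},\mathbf{w}_{g})$ gives $\sum_{g}h_{g}\,\mathrm{E}(R\mid Z_{g}=1,\mathbf{x})=\mathrm{E}\bigl(\sum_{g}h_{g}\,r_{g}\mid\mathbf{x}\bigr)-\sum_{g}h_{g}\,\beta(\mathbf{x},\mathbf{w}_{g})+\sum_{g}h_{g}\,\beta(\mathbf{x},\mathbf{w}_{g})$, and the non-aliasing identity $\sum_{g}h_{g}\,\beta(\mathbf{x},\mathbf{w}_{g})=0$ of Definition~\ref{DefNotAliasedContrast} makes both bias sums vanish, leaving the stated equality. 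The only real obstacle is measurability bookkeeping: one must remember that, for each fixed $g$, $\mathbf{w}_{g}$ is a deterministic constant, so $\beta(\mathbf{x},\mathbf{w}_{g})$ is $\sigma(\mathbf{x})$-measurable, which is exactly what lets the stated ignorability be applied to $r_{g}-\beta(\mathbf{x},\mathbf{w}_{g})$ and then transferred back to $r_{g}$ without loss.
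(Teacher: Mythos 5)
Your proof is correct and takes essentially the same route as the paper's: identify $\mathrm{E}\left(\left. R\,\right\vert\, Z_{g}=1,\,\mathbf{x}\right)$ with $\mathrm{E}\left(\left. r_{g}\,\right\vert\, Z_{g}=1,\,\mathbf{x}\right)$, center by $\beta\left(\mathbf{x},\,\mathbf{w}_{g}\right)$, apply (\ref{eqIwithAci}) to the centered outcome, and then cancel the $\beta$ terms in the $h_{g}$-weighted sum using Definition~\ref{DefNotAliasedContrast}. One small caution: by treating $\beta\left(\mathbf{x},\,\mathbf{w}_{g}\right)$ as exactly recoverable from $\mathbf{x}$ and adding it back term by term, your write-up makes the non-aliasing hypothesis look redundant (your $\pm\beta$ terms cancel pairwise before you ever sum), whereas the paper cancels them only through the contrast identity $0=\sum_{g=1}^{G}h_{g}\,\beta\left(\mathbf{x},\,\mathbf{w}_{g}\right)$ --- the place where that hypothesis genuinely does its work, and the formulation that remains valid under the weaker, equi-confounding reading of (\ref{eqIwithAci}) the authors intend.
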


\begin{proof}
Because (\ref{eqIwithApositive}) holds, $\mathrm{E}\left( \left.
R\,\right\vert \,Z_{g}=1,\,\mathbf{x}\right) $ is estimable from observable
joint distributions for $g=1,\ldots ,\,G$. \ Repeatedly using the equality $%
0=\sum_{g=1}^{G}h_{g}\,\beta \left( \mathbf{x},\,\mathbf{w}_{g}\right) $, we
have:%
\begin{equation*}
\sum_{g=1}^{G}h_{g}\,\mathrm{E}\left( \left. R\,\right\vert \,Z_{g}=1,\,%
\mathbf{x}\right) =\sum_{g=1}^{G}h_{g}\,\mathrm{E}\left( \left.
r_{g}\,\right\vert \,Z_{g}=1,\,\mathbf{x}\right) \text{, as }R=r_{g}\text{
if }Z_{g}=1
\end{equation*}%
\begin{equation*}
=\sum_{g=1}^{G}h_{g}\,\mathrm{E}\left( \left. r_{g}\,\right\vert \,Z_{g}=1,\,%
\mathbf{x}\right) -\sum_{g=1}^{G}h_{g}\,\beta \left( \mathbf{x},\,\mathbf{w}%
_{g}\right) \text{ as }0=\sum_{g=1}^{G}h_{g}\,\beta \left( \mathbf{x},\,%
\mathbf{w}_{g}\right)
\end{equation*}%
\begin{equation*}
=\sum_{g=1}^{G}h_{g}\,\mathrm{E}\left\{ \left. r_{g}-\beta \left( \mathbf{x}%
,\,\mathbf{w}_{g}\right) \,\right\vert \,Z_{g}=1,\,\mathbf{x}\right\} \text{
as }\mathbf{w}=\mathbf{w}_{g}\text{ iff }Z_{g}=1
\end{equation*}%
\begin{eqnarray*}
&=&\sum_{g=1}^{G}h_{g}\,\mathrm{E}\left\{ \left. r_{g}-\beta \left( \mathbf{x%
},\,\mathbf{w}_{g}\right) \,\right\vert \,\mathbf{x}\right\} \text{ by (\ref%
{eqIwithAci})} \\
&=&\sum_{g=1}^{G}h_{g}\,\mathrm{E}\left( \left. r_{g}\,\right\vert \,\mathbf{%
x}\right) -\sum_{g=1}^{G}h_{g}\,\mathrm{E}\left\{ \left. \beta \left( 
\mathbf{x},\,\mathbf{w}_{g}\right) \,\right\vert \,\mathbf{x}\right\} \\
&=&\sum_{g=1}^{G}h_{g}\,\mathrm{E}\left( \left. r_{g}\,\right\vert \,\mathbf{%
x}\right) -\mathrm{E}\left\{ \left. \sum_{g=1}^{G}h_{g}\,\beta \left( 
\mathbf{x},\,\mathbf{w}_{g}\right) \,\right\vert \,\mathbf{x}\right\} \\
&=&\sum_{g=1}^{G}h_{g}\,\mathrm{E}\left( \left. r_{g}\,\right\vert \,\mathbf{%
x}\right) \text{ as }0=\sum_{g=1}^{G}h_{g}\,\beta \left( \mathbf{x},\,%
\mathbf{w}_{g}\right) \text{.}
\end{eqnarray*}
\end{proof}

\subsection{Propensity Scores and Balancing Scores}

Much of the basic theory of ignorable treatment assignment and propensity
scores (\cite{rosenbaum2023propensity}) carries over to
the situation with aliasing, even when treatment assignment is a
deterministic function of the aliased covariates, $\mathbf{w}$, as we
briefly demonstrate. \ The practical point of Proposition \ref%
{propEstimateContrastPropensity} is that Proposition \ref%
{propEstimateContrast} is useful without conditioning on all $\mathbf{x}$;
rather, it suffices to condition or match on enough to balance $\mathbf{x}$.

Define $\mathbf{e}\left( \mathbf{x}\right) =\left\{ \Pr \left( \left.
Z_{1}=1\,\right\vert \,\mathbf{x}\right) ,\ldots ,\Pr \left( \left.
Z_{G}=1\,\right\vert \,\mathbf{x}\right) \right\} =\left\{ e_{1}\left( 
\mathbf{x}\right) ,\ldots ,e_{G}\left( \mathbf{x}\right) \right\} $. \
Proposition \ref{propBalancing}\ is due to \cite{imai2004causal} and does
not involve aliasing.

\begin{proposition}
\label{propBalancing} (Imai and Van Dyk) \ (i) $\ \mathbf{Z}\left. \,%
\indep\,\mathbf{x}\,\,\right\vert \,%
\mathbf{e}\left( \mathbf{x}\right) $ and 
\begin{equation}
\text{(ii) }\ \mathbf{Z}\left. \,\indep\,%
\mathbf{x}\,\,\right\vert \,\left\{ \mathbf{e}\left( \mathbf{x}\right) ,\,%
\mathbf{f}\left( \mathbf{x}\right) \right\} \text{ for any }\mathbf{f}\left(
\cdot \right) \text{.}  \label{eqBalance}
\end{equation}
\end{proposition}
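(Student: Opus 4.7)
The plan is to verify (i) by a two-step calculation based on the tower property, and then to obtain (ii) with essentially the same argument because nothing in it exploits any property of $\mathbf{e}(\mathbf{x})$ beyond its being a function of $\mathbf{x}$ whose coordinates include each $e_{g}(\mathbf{x})$.

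First, I would fix any admissible assignment vector $\mathbf{z}=(z_{1},\ldots,z_{G})$ with $z_{g}\in\{0,1\}$ and $\sum_{g}z_{g}=1$, and let $g^{\ast}$ denote the unique index with $z_{g^{\ast}}=1$. By definition of $\mathbf{e}(\mathbf{x})$, $\Pr(\mathbf{Z}=\mathbf{z}\mid\mathbf{x})=e_{g^{\ast}}(\mathbf{x})$, which depends on $\mathbf{x}$ only through $\mathbf{e}(\mathbf{x})$. Because $\mathbf{e}(\mathbf{x})$ is itself a function of $\mathbf{x}$, augmenting the conditioning set leaves the probability unchanged, yielding
\[
\Pr\{\mathbf{Z}=\mathbf{z}\mid\mathbf{x},\,\mathbf{e}(\mathbf{x})\}=\Pr(\mathbf{Z}=\mathbf{z}\mid\mathbf{x})=e_{g^{\ast}}(\mathbf{x}).
\]

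Second, I would apply the tower property, using that $e_{g^{\ast}}(\mathbf{x})$ is measurable with respect to $\sigma\{\mathbf{e}(\mathbf{x})\}$ as a coordinate of $\mathbf{e}(\mathbf{x})$, to obtain
\[
\Pr\{\mathbf{Z}=\mathbf{z}\mid\mathbf{e}(\mathbf{x})\}=\mathrm{E}\{\Pr(\mathbf{Z}=\mathbf{z}\mid\mathbf{x})\mid\mathbf{e}(\mathbf{x})\}=\mathrm{E}\{e_{g^{\ast}}(\mathbf{x})\mid\mathbf{e}(\mathbf{x})\}=e_{g^{\ast}}(\mathbf{x}).
\]
Equating the two displays gives $\Pr\{\mathbf{Z}=\mathbf{z}\mid\mathbf{x},\mathbf{e}(\mathbf{x})\}=\Pr\{\mathbf{Z}=\mathbf{z}\mid\mathbf{e}(\mathbf{x})\}$ for every admissible $\mathbf{z}$, which is the content of (i).

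For (ii), I would run the identical two steps with $\mathbf{e}(\mathbf{x})$ replaced throughout by the pair $\{\mathbf{e}(\mathbf{x}),\mathbf{f}(\mathbf{x})\}$. Since $\mathbf{f}(\mathbf{x})$ is a function of $\mathbf{x}$, the first step still collapses to $e_{g^{\ast}}(\mathbf{x})$, and since $e_{g^{\ast}}(\mathbf{x})$ remains measurable with respect to the now-larger conditioning $\sigma$-field generated by $\{\mathbf{e}(\mathbf{x}),\mathbf{f}(\mathbf{x})\}$, the second step still pulls it outside the conditional expectation. The only thing to watch is the bookkeeping about which quantities are measurable with respect to which $\sigma$-field; there is no analytic obstacle, and this is in essence the original Imai and Van Dyk argument stated for a vector-valued treatment indicator $\mathbf{Z}$ rather than a scalar one.
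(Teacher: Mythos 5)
Your proof of part (i) is correct and is essentially the paper's own argument: both reduce the claim to showing $\Pr\{\mathbf{Z}=\mathbf{z}\mid\mathbf{e}(\mathbf{x})\}=\Pr(\mathbf{Z}=\mathbf{z}\mid\mathbf{x})$ and establish it by the tower property, using that $e_{g}(\mathbf{x})$ is a coordinate of $\mathbf{e}(\mathbf{x})$ and hence measurable with respect to $\sigma\{\mathbf{e}(\mathbf{x})\}$. The only divergence is in part (ii): the paper dispatches it in one line by citing Lemma 3 of Dawid (1979), which says that conditional independence is preserved when the conditioning variable is augmented by any function of it, whereas you rerun the tower-property computation directly with the enlarged conditioning set $\{\mathbf{e}(\mathbf{x}),\mathbf{f}(\mathbf{x})\}$. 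Your route is self-contained and slightly more elementary --- it makes explicit that the only facts used are that $\mathbf{f}(\mathbf{x})$ is $\sigma(\mathbf{x})$-measurable (so the inner conditioning still collapses to $e_{g}(\mathbf{x})$) and that $e_{g}(\mathbf{x})$ remains measurable with respect to the larger $\sigma$-field (so it still pulls out of the outer conditional expectation) --- at the cost of a few extra lines; the paper's citation buys brevity and situates the step within a standard calculus of conditional independence. Both are correct.
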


\begin{proof}
As $\mathbf{e}\left( \mathbf{x}\right) $ is a function of $\mathbf{x}$, to
demonstrate (i) the task is to show $\Pr \left( \left. \mathbf{Z}%
\,\right\vert \,\mathbf{x}\right) =\Pr \left\{ \left. \mathbf{Z}%
\,\right\vert \,\mathbf{e}\left( \mathbf{x}\right) \right\} $, which follows
from: 
\begin{eqnarray*}
\Pr \left\{ \left. Z_{g}=1\,\right\vert \,\mathbf{e}\left( \mathbf{x}\right)
\right\} &=&\mathrm{E}\left[ \Pr \left\{ \left. Z_{g}=1\,\right\vert \,%
\mathbf{x}\right\} \,|\,\mathbf{e}\left( \mathbf{x}\right) \right] =\mathrm{E%
}\left\{ e_{g}\left( \mathbf{x}\right) \,|\,\mathbf{e}\left( \mathbf{x}%
\right) \right\} \\
&=&e_{g}\left( \mathbf{x}\right) =\Pr \left( \left. Z_{g}=1\,\right\vert \,%
\mathbf{x}\right) \text{.}
\end{eqnarray*}%
Also, (ii) follows from (i) by Lemma 3 of \cite{dawid1979conditional}.
\end{proof}

Although Proposition \ref{propEstimateContrast} gives a sufficient condition
for the estimability of a causal contrast, $\mathrm{E}\left( \left.
\sum_{g=1}^{G}h_{g}\,r_{g}\,\right\vert \,\mathbf{x}\right) $, a weaker
condition is given in Proposition \ref{propEstimateContrastPropensity}. \
Instead of matching or otherwise adjusting for all of $\mathbf{x}$, it
suffices to adjust for a function of $\mathbf{x}$, specifically for any
function that is at least as fine as the propensity score, that is for any $%
\left\{ \mathbf{e}\left( \mathbf{x}\right) ,\,\mathbf{f}\left( \mathbf{x}%
\right) \right\} $. \ Taking $\mathbf{f}\left( \mathbf{x}\right) =\mathbf{x}$
in Proposition \ref{propEstimateContrastPropensity} yields Proposition \ref%
{propEstimateContrast}, whereas if $\mathbf{f}\left( \mathbf{x}\right) $ is
set to a constant that ignores $\mathbf{x}$, then Proposition \ref%
{propEstimateContrastPropensity} says it suffices to match or block for $%
\mathbf{e}\left( \mathbf{x}\right) $ alone. \ Proposition \ref%
{propEstimateContrastPropensity} says that we need not match groups exactly
for $\mathbf{x}$, providing $\mathbf{x}$ is balanced across groups in
matched blocks in the sense that (\ref{eqBalance}) holds.

\begin{proposition}
\label{propEstimateContrastPropensity} If treatment assignment is ignorable
given covariates $\mathbf{x}$ with aliasing by functions $\beta \left( 
\mathbf{x},\,\mathbf{w}_{g}\right) $, then for any $\mathbf{f}\left( \cdot
\right) $ it is also ignorable given $\left\{ \mathbf{e}\left( \mathbf{x}%
\right) ,\,\mathbf{f}\left( \mathbf{x}\right) \right\} $ with aliasing by
functions $\beta \left( \mathbf{x},\,\mathbf{w}_{g}\right) $, i.e., 
\begin{equation}
\mathbf{Z}\left. \,\indep\,\left\{
r_{1}-\beta \left( \mathbf{x},\,\mathbf{w}_{1}\right) ,\,\ldots
,\,r_{G}-\beta \left( \mathbf{x},\,\mathbf{w}_{G}\right) \right\}
\,\,\right\vert \,\left\{ \mathbf{e}\left( \mathbf{x}\right) ,\,\mathbf{f}%
\left( \mathbf{x}\right) \right\} \text{,}  \label{eqIwithAciPropensity}
\end{equation}%
\begin{equation}
0<\Pr \left\{ \left. Z_{g}=1\,\right\vert \,\mathbf{e}\left( \mathbf{x}%
\right) ,\,\mathbf{f}\left( \mathbf{x}\right) \right\} <1\text{ for }%
g=1,\ldots ,G\text{;}  \label{eqIwithApositivePropensity}
\end{equation}%
so, the expected causal contrast $\mathrm{E}\left\{ \left.
\sum_{g=1}^{G}h_{g}\,r_{g}\,\right\vert \,\mathbf{e}\left( \mathbf{x}\right)
,\,\mathbf{f}\left( \mathbf{x}\right) \right\} $ equals a contrast of
estimable functions%
\begin{equation}
\mathrm{E}\left\{ \left. \sum_{g=1}^{G}h_{g}\,r_{g}\,\right\vert \,\mathbf{e}%
\left( \mathbf{x}\right) ,\,\mathbf{f}\left( \mathbf{x}\right) \right\}
=\sum_{g=1}^{G}h_{g}\,\mathrm{E}\left\{ \left. R\,\right\vert \,Z_{g}=1,\,%
\mathbf{e}\left( \mathbf{x}\right) ,\,\mathbf{f}\left( \mathbf{x}\right)
\right\} \text{.}  \label{eqEstimatePropContrast}
\end{equation}
\end{proposition}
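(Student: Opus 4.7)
The plan is to mirror the classical Rosenbaum--Rubin balancing-score argument, lifted to the aliased setting, in three steps: (i) inherit the conditional-independence half of aliased ignorability from (\ref{eqIwithAci}); (ii) verify the positivity condition (\ref{eqIwithApositivePropensity}); and (iii) recycle the chain of equalities in the proof of Proposition \ref{propEstimateContrast}, with the conditioning variable $\mathbf{x}$ replaced throughout by $\{\mathbf{e}(\mathbf{x}),\mathbf{f}(\mathbf{x})\}$.

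For step (i), write $U_{g}=r_{g}-\beta(\mathbf{x},\mathbf{w}_{g})$ and $U=(U_{1},\ldots,U_{G})$, so that (\ref{eqIwithAci}) reads $\mathbf{Z}\indep U\mid\mathbf{x}$, while Proposition \ref{propBalancing}(ii) supplies the companion $\mathbf{Z}\indep\mathbf{x}\mid\{\mathbf{e}(\mathbf{x}),\mathbf{f}(\mathbf{x})\}$. Since $\{\mathbf{e}(\mathbf{x}),\mathbf{f}(\mathbf{x})\}$ is a function of $\mathbf{x}$, the second relation yields $\Pr(\mathbf{Z}\mid\mathbf{x})=\Pr\{\mathbf{Z}\mid\mathbf{e}(\mathbf{x}),\mathbf{f}(\mathbf{x})\}$, while the first yields $\Pr(\mathbf{Z}\mid\mathbf{x},U)=\Pr(\mathbf{Z}\mid\mathbf{x})$. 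Chaining these and then taking conditional expectation given $\{\mathbf{e}(\mathbf{x}),\mathbf{f}(\mathbf{x}),U\}$ collapses $\mathbf{x}$ and delivers (\ref{eqIwithAciPropensity}). This is the main technical step; it is essentially an instance of Dawid's contraction property, already invoked to justify Proposition \ref{propBalancing}(ii).

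For step (ii), each $e_{g}(\mathbf{x})$ is a coordinate of $\mathbf{e}(\mathbf{x})$ and hence a function of $\{\mathbf{e}(\mathbf{x}),\mathbf{f}(\mathbf{x})\}$, so iterated expectation gives $\Pr\{Z_{g}=1\mid\mathbf{e}(\mathbf{x}),\mathbf{f}(\mathbf{x})\}=\mathrm{E}[e_{g}(\mathbf{x})\mid\mathbf{e}(\mathbf{x}),\mathbf{f}(\mathbf{x})]=e_{g}(\mathbf{x})$, which lies strictly between $0$ and $1$ by (\ref{eqIwithApositive}); this is (\ref{eqIwithApositivePropensity}). For step (iii), (\ref{eqEstimatePropContrast}) follows by running the computation in the proof of Proposition \ref{propEstimateContrast} verbatim with $\mathbf{x}$ replaced by $\{\mathbf{e}(\mathbf{x}),\mathbf{f}(\mathbf{x})\}$ and with (\ref{eqIwithAciPropensity}) used in place of (\ref{eqIwithAci}). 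The one point that deserves care is the cancellation $\sum_{g}h_{g}\,\beta(\mathbf{x},\mathbf{w}_{g})=0$: this is a pointwise identity in $\mathbf{x}$, so conditioning on any function of $\mathbf{x}$, in particular on $\{\mathbf{e}(\mathbf{x}),\mathbf{f}(\mathbf{x})\}$, preserves it under an outer expectation. Once step (i) is in place, the remainder is formal.
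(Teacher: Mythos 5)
Your proposal is correct and follows essentially the same route as the paper: the paper's proof likewise establishes (\ref{eqIwithAciPropensity}) by iterated expectation, showing $\Pr\{Z_{g}=1\mid \mathbf{d},\,\mathbf{e}(\mathbf{x}),\mathbf{f}(\mathbf{x})\}=e_{g}(\mathbf{x})=\Pr\{Z_{g}=1\mid \mathbf{e}(\mathbf{x}),\mathbf{f}(\mathbf{x})\}$ using exactly the measurability of $e_{g}(\mathbf{x})$ with respect to $\{\mathbf{e}(\mathbf{x}),\mathbf{f}(\mathbf{x})\}$, then deduces positivity and invokes Proposition \ref{propEstimateContrast} with $\mathbf{x}$ replaced by $\{\mathbf{e}(\mathbf{x}),\mathbf{f}(\mathbf{x})\}$. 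The only cosmetic difference is that you route the first step through Proposition \ref{propBalancing}(ii) explicitly rather than re-deriving it inline, and you usefully flag that the pointwise identity $\sum_{g}h_{g}\,\beta(\mathbf{x},\mathbf{w}_{g})=0$ survives conditioning on the coarser $\sigma$-field --- a detail the paper leaves implicit.
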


\begin{proof}
Write $\mathbf{d}=\left\{ r_{1}-\beta \left( \mathbf{x},\,\mathbf{w}%
_{1}\right) ,\,\ldots ,\,r_{G}-\beta \left( \mathbf{x},\,\mathbf{w}%
_{G}\right) \right\} $ and $\mathbf{s}\left( \mathbf{x}\right) =\left\{ 
\mathbf{e}\left( \mathbf{x}\right) ,\,\mathbf{f}\left( \mathbf{x}\right)
\right\} $. \ Then to demonstrate (\ref{eqIwithAciPropensity}), it must be
shown that $\Pr \left\{ \left. \mathbf{Z}\,\right\vert \,\mathbf{d},\,%
\mathbf{s}\left( \mathbf{x}\right) \right\} =\Pr \left\{ \left. \mathbf{Z}%
\,\right\vert \,\mathbf{s}\left( \mathbf{x}\right) \right\} $. \ From (\ref%
{eqIwithAci}), $\Pr \left( \left. Z_{g}=1\,\right\vert \,\mathbf{d},\,%
\mathbf{x}\right) =\Pr \left( \left. Z_{g}=1\,\right\vert \,\mathbf{x}%
\right) $, so that%
\begin{gather*}
\Pr \left\{ \left. Z_{g}=1\,\right\vert \,\mathbf{d},\,\mathbf{s}\left( 
\mathbf{x}\right) \right\} =\mathrm{E}\left\{ \left. \Pr \left\{ \left.
Z_{g}=1\,\right\vert \,\mathbf{d},\,\mathbf{x}\right\} \,\right\vert \,%
\mathbf{d},\,\mathbf{s}\left( \mathbf{x}\right) \right\} \\
=\mathrm{E}\left\{ \left. \Pr \left( \left. Z_{g}=1\,\right\vert \,\mathbf{x}%
\right) \,\right\vert \,\mathbf{d},\,\mathbf{s}\left( \mathbf{x}\right)
\right\} =\mathrm{E}\left\{ \left. e_{g}\left( \mathbf{x}\right)
\,\right\vert \,\mathbf{d},\,\mathbf{s}\left( \mathbf{x}\right) \right\} 
\text{.}
\end{gather*}%
As $e_{g}\left( \mathbf{x}\right) $ is part of $\mathbf{s}\left( \mathbf{x}%
\right) =\left\{ \mathbf{e}\left( \mathbf{x}\right) ,\,\mathbf{f}\left( 
\mathbf{x}\right) \right\} $, 
\begin{gather*}
\mathrm{E}\left\{ \left. e_{g}\left( \mathbf{x}\right) \,\right\vert \,%
\mathbf{d},\,\mathbf{s}\left( \mathbf{x}\right) \right\} =e_{g}\left( 
\mathbf{x}\right) =\mathrm{E}\left\{ \left. e_{g}\left( \mathbf{x}\right)
\,\right\vert \,\mathbf{s}\left( \mathbf{x}\right) \right\} \\
=\mathrm{E}\left\{ \left. \Pr \left( \left. Z_{g}=1\,\right\vert \,\mathbf{x}%
\right) \,\right\vert \,\mathbf{s}\left( \mathbf{x}\right) \right\} =\Pr
\left\{ \left. Z_{g}=1\,\right\vert \,\mathbf{s}\left( \mathbf{x}\right)
\right\} \text{,}
\end{gather*}%
proving (\ref{eqIwithAciPropensity}). \ Then (\ref%
{eqIwithApositivePropensity}) follows from (\ref{eqIwithApositive}) because 
\begin{equation*}
\Pr \left\{ \left. Z_{g}=1\,\right\vert \,\mathbf{s}\left( \mathbf{x}\right)
\right\} =\mathrm{E}\left\{ \left. \Pr \left( \left. Z_{g}=1\,\right\vert \,%
\mathbf{x}\right) \,\right\vert \,\mathbf{s}\left( \mathbf{x}\right)
\right\} \text{.}
\end{equation*}%
Finally, (\ref{eqEstimatePropContrast}) follows from Proposition \ref%
{propEstimateContrast} with $\mathbf{x}$ replaced by $\left\{ \mathbf{e}%
\left( \mathbf{x}\right) ,\,\mathbf{f}\left( \mathbf{x}\right) \right\} $.
\end{proof}

\subsection{Checking Covariate Balance with Aliasing}

Blocking entails selecting individuals and grouping them into blocks that
balance $\mathbf{x}$ and are homogeneous in certain features $\mathbf{f}%
\left( \mathbf{x}\right) $ of $\mathbf{x}$. Let $b$ denote the block to
which an individual is assigned. \ Let $\mathcal{G}_{b}\subseteq \left\{
1,\ldots ,G\right\} $ be the groups present in block $b$. \ For instance, in
Table 4, the 700 blocks $b$ of type 2 all have $\mathcal{G}_{b}=\left\{
1,3,5,7\right\} $ in the first column of Table 3.

If blocks were homogeneous in $\left\{ \mathbf{e}\left( \mathbf{x}\right) ,\,%
\mathbf{f}\left( \mathbf{x}\right) \right\} $, then (\ref{eqBalance}) would
hold, and this motivates Definition \ref{DefBalance}. \ Recall that $A$ is
independent of $B$ if and only if $\mathrm{E}\left\{ \left. c\left( A\right)
\,\right\vert \,B\right\} =\mathrm{E}\left\{ c\left( A\right) \right\} $ for
all functions $c\left( \cdot \right) $. \ (More precisely: this condition
must hold for almost all $B$ and for all measurable $c\left( \cdot \right) $
for which the expectations exist, but we will not repeat this.)

\begin{definition}
\label{DefBalance} The distribution of $\mathbf{x}$ is balanced within
blocks if $\mathbf{x}\left. \,\indep\,%
\mathbf{Z}\,\,\right\vert \,b$, or equivalently if $\mathrm{E}\left\{ \left.
c\left( \mathbf{x}\right) \,\right\vert \,b,\,Z_{g}=1\right\} =$ $\mathrm{E}%
\left\{ \left. c\left( \mathbf{x}\right) \,\right\vert \,b\right\} $ for
every $c\left( \cdot \right) $ and $b$, or equivalently if for every $%
c\left( \cdot \right) $ and $b$, 
\begin{equation*}
\sum_{g\in \mathcal{G}_{b}}h_{g}\,\mathrm{E}\left\{ \left. c\left( \mathbf{x}%
\right) \,\right\vert \,b,\,Z_{g}=1\right\} =0\text{ for every }h_{g}\text{
such that }0=\sum_{g\in \mathcal{G}_{b}}h_{g}\text{.}
\end{equation*}
\end{definition}

A block design, with or without aliasing, must be checked for adequacy by
checking whether $\mathbf{x}$ is balanced by blocking; see, for instance, 
\cite{yu2021evaluating} for a modern discussion and example.

The situation is different for eligibility covariates, $\mathbf{w}$, that
are aliased with groups. \ Eligibility covariates $\mathbf{w}$ are always
imbalanced, because they determine $\mathbf{Z}$; so, they never satisfy
Definition \ref{DefBalance}. \ However, a function such as $\eta \left( 
\mathbf{x},\,w^{\prime },\,w^{\prime \prime  }\right) $ or $\xi \left( 
\mathbf{x},\,w^{\prime }, \, w^{\prime \prime \prime } \right) $ in (\ref{eqSpecialAlias}) may be
balanced in some blocks $b$ for a specific contrast such as $\mathbf{h}$ in
Table 3, and this motivates Definition \ref{DefBalance2}.

\begin{definition}
\label{DefBalance2} A function $\zeta \left( \mathbf{x},\mathbf{w}\right) $
of unaliased covariates $\mathbf{x}$\ and aliased covariates $\mathbf{w}$ is
balanced with respect to contrast $h_{g}$, $g=1,\ldots ,G$, in block $b$ if:%
\begin{equation}
0=\sum_{g\in \mathcal{G}_{b}}h_{g}\text{ \ and \ }\sum_{g\in \mathcal{G}%
_{b}}h_{g}\,\mathrm{E}\left\{ \left. \zeta \left( \mathbf{x},\mathbf{w}%
\right) \,\right\vert \,b,\,Z_{g}=1\right\} =0\text{.}
\label{eqBalanceContrast}
\end{equation}
\end{definition}

To illustrate, in Table 3, consider a block $b$ of size 4 containing
treatments $g=1$, 3, 5, 7, or $\overline{\mathrm{BR}}$, $\overline{\mathrm{bR%
}}$, $\mathrm{BR}$, $\mathrm{bR}$; so, $\mathcal{G}_{b}=\left\{
1,\,3,\,5,\,7\right\} $, and $\sum_{g\in \mathcal{G}_{b}}h_{g}=-\frac{1}{4}+%
\frac{1}{4}+\frac{1}{4}-\frac{1}{4}=0$. \ Suppose that the conditions of
Proposition \ref{propEstimateContrastPropensity} hold and the blocking
controlled $\left\{ \mathbf{e}\left( \mathbf{x}\right) ,\,\mathbf{f}\left( 
\mathbf{x}\right) \right\} $. \ Then that contrast applied to $R$ in this
block is a estimate of the effect of an increase in benefits duration, $B/b$%
, if $\beta \left( \mathbf{x},\,\mathbf{w}\right) $ has the form $\beta
\left( \mathbf{x},\,\mathbf{w}\right) =\eta \left( \mathbf{x},\,w^{\prime
},\,w^{\prime \prime }\right) +\xi \left( \mathbf{x},\,w^{\prime}, \, w^{\prime \prime \prime }\right) $ in (\ref{eqSpecialAlias}). \ So, Definition \ref%
{DefBalance2} suggests checking various functions $\zeta \left( \mathbf{x},%
\mathbf{w}\right) $ of this form, for instance $\zeta \left( \mathbf{x},%
\mathbf{w}\right) =x_{1}\times w^{\prime }\times w^{\prime \prime \prime }$,
to see whether these functions are balanced in the sense of (\ref%
{eqBalanceContrast}). \ Of course, a function $\eta \left( \mathbf{x}%
,\,w^{\prime },\,w^{\prime \prime \prime }\right) =w^{\prime }$ is
necessarily balanced in the sense of (\ref{eqBalanceContrast}), but a
function $\eta \left( \mathbf{x},\,w^{\prime },\,w^{\prime \prime \prime
}\right) =x_{1}\times w^{\prime }$ may or many not be balanced, and finding
an imbalance suggests that the blocking has not adequately balanced the
covariates. \ 

A block $\mathcal{G}_{b}=\left\{ 1,\,3,\,5,\,7\right\} $ provides a contrast
to estimate the effect of benefits duration, $B/b$, for LE or $w^{\prime }=1$%
. \ A block $\mathcal{G}_{b}=\left\{ 2,\,4,\,6,\,8\right\} $ provides a
contrast to estimate the effect of benefits duration, $B/b$, for LE or $%
w^{\prime }=-1$. \ Those estimates may have different estimands if there is
effect modification by LE.

\section{Benefits Duration and Weeks of Unemployment}
\vspace{-.5cm}
\subsection{Checking Covariate Balance}

Figure 1 checks the covariate age, $x_{3}$, for balance in each of the six
types of blocks and for $h_{g}=1$ or $h_{g}=-1$ in Table 4. \ Here,
covariate balance refers to a particular contrast $h_{g}$, as in Definition %
\ref{DefBalance2}. \ A difference-in-difference effect estimate in a block
type in Table 4 will compare individuals with $h_{g}=1$ to individuals with $%
h_{g}=-1$, and in each block type in Figure 1, age does look balanced for $%
h_{g}=1$ or $h_{g}=-1$. \ For an unaliased covariate like age in $\mathbf{x}$%
, balance could be checked in over four groups defined by $g\in \mathcal{G}%
_{b}$ --- that is balance could be checked for every contrast in Definition %
\ref{DefBalance}, rather than for the two groups defined by $h_{g}=1$ or $%
h_{g}=-1$ in Table 4, but this is not possible for aliased covariates.

The \textquotedblleft low earnings\textquotedblright\ covariate LE or $%
w^{\prime }$ in Table 3 is perfectly balanced over $h_{g}=1$ or $h_{g}=-1$
by design in every block type, 1 to 6: it is equally represented, block by
block, in after and before periods. \ So, there is nothing to check so far
as LE or $w^{\prime }$ is concerned. \ However, the LE-by-Time interaction, $%
w^{\prime }\times w^{\prime \prime \prime }$, is perfectly balanced over $%
h_{g}=1$ or $h_{g}=-1$ in blocks of types 2 = BR/bR and 5 = Br/br, but it is
aliased with $h_{g}=1$ or $h_{g}=-1$ in other blocks. \ In words, an
interaction between low earnings and time does not bias contrasts estimating
the effect of B-vs-b at high R in blocks of type 2, nor the effect of B-vs-b
at low r in blocks of type 5. In parallel, the interaction between
\textquotedblleft infrequent unemployment\textquotedblright\ and Time or $%
w^{\prime \prime }\times w^{\prime \prime \prime }$ is balanced over $%
h_{g}=1 $ or $h_{g}=-1$ in blocks of types 1 = BR/Br and 6 = bR/br, but is
aliased in other block types.

The situation is different for the interaction covariate \textquotedblleft
age $\times $ LE\textquotedblright\ or $x_{3}\times w^{\prime }$. \ Like age
itself, $x_{3}\times w^{\prime }$ may be balanced or not over $h_{g}=1$ or $%
h_{g}=-1$, and we need to look at the data to judge whether matching has
balanced $x_{3}\times w^{\prime }$ over $h_{g}=1$ or $h_{g}=-1$. This
covariate $x_{3}\times w^{\prime }$ is checked in Figure 2, consistent with
Definition \ref{DefBalance2}. \ Because $w^{\prime }=\pm 1$, both positive
and negative values of $x_{3}\times w^{\prime }$ may appear in Figure 2,
although only $w^{\prime }=1$ appears in block type 2 so only positive
values appear Figure 2, and only $w^{\prime }=-1$ appears in block type 5,
so only negative values appear in Figure 2. \ In every block, however,
individuals with positive and negative contrast values, $h_{g}=1$ or $%
h_{g}=-1$, have similar distributions of $x_{3}\times w^{\prime }$; so, an
imbalance in $x_{3}\times w^{\prime }$ cannot bias the contrast.

Table 5 is analogous to Figure 1. \ For each covariate $x_{k}$, Table 5
compares the distribution of $x_{k}$ for individuals with positive and
negative contrast values, $h_{g}=1$ or $h_{g}=-1$, to complete randomization
of $x_{k}$ to $h_{g}=1$ or $h_{g}=-1$. \ This is done separately in each of
the six block types, as $h_{g}=1$ or $h_{g}=-1$ has a different meaning in
different block types, and then the within-type $P$-values are combined over
block types using \cite{zaykin2002truncated}'s truncated product method of
combining independent $P$-values, as implemented in the \texttt{truncatedP}
method in the \texttt{R} package \texttt{sensitivitymv} with the default
truncation of 0.2. \ This method defines a new statistic as the product of all $P$-values less
than or equal to the truncation point, or 1 if no $P$-value is below the truncation point.  The null distribution of this new statistic is determined, and that yields its $P$-value, which is reported in column C of Table 5. \ If the truncation point were 1 instead of 0.2, then the
method would be equivalent to Fisher's method for combining independent $P$%
-values. \ As all of the $P$-values in Table 5 are above 0.2, the combined $%
P $-value is always 1. \ The balance of covariates in Table 5 is much better
than expected from random assignment to $h_{g}=1$ or $h_{g}=-1$.

Several aspects of Table 5 deserve note, as they serve to illustrate
Definition \ref{DefBalance2}. \ Recall from Definition \ref{DefBalance2}
that a covariate may be balanced with respect to contrast $h_{g}$ over
several groups even though some individual groups do not overlap with
respect to that covariate. \ The covariate $w^{\prime }$ or LE is perfectly
balanced because the same number of individuals in group $h_{g}=1$ have $w^{\prime
}=1$ as in
group $h_{g}=-1$. \ The same is true of the covariate $w^{\prime \prime }$
or IU, and of the time period $w^{\prime \prime \prime }$. \ Both $w^{\prime
}$ and $w^{\prime \prime }$ are two-level covariates formed by cutting
continuous variables, namely \textquotedblleft prior wage\textquotedblright\
for $w^{\prime }$ and \textquotedblleft relative
employment\textquotedblright\ for $w^{\prime \prime }$. \ Because
eligibility cuts continuous covariates, the support of the continuous
covariates does not overlap in some pairs of treatment groups. \
Nonetheless, these continuous covariates are balanced with respect to $h_{g}$
in the sense of Definition \ref{DefBalance2}: the contrast groups, $h_{g}=1$
or $h_{g}=-1$, are balanced with respect to the continuous covariates by
merging two groups with nonoverlapping support. \ An estimator of a
treatment effect is a contrast in outcomes that will compare outcomes in the 
$h_{g}=1$ or $h_{g}=-1$ groups, and that comparison is balanced with respect
to covariates.

Table 6 is similar to Table 5, except Table 6 resembles Figure 2 in
examining the balance of $x_{k}\times w^{\prime }$, rather than the balance
of $x_{k}$ in Table 5. In other words, Table 6 is checking the balance of an
interaction between an eligibility covariate, $w^{\prime }$, and a covariate 
$x_{k}$. \ An additional table for $x_{k}\times w^{\prime \prime }$ is in
the Web Appendix. \ In this table also, the balance is much better than is
expected from complete randomization of individuals to $h_{g}=1$ or $h_{g}=-1$. \ So,
covariates of the form $x_{k}\times w^{\prime }$ are balanced, and therefore are not likely to bias an estimate
that compares individuals with $h_{g}=1$ and $h_{g}=-1$.

Appendix Tables 2 to 5 are similar to Table 6, but they explore different interactions between covariates $x_{k}$ and eligibility covariates, $w^{\prime }$, $w^{\prime\prime }$, and $w^{\prime\prime\prime }$.  Appendix Tables 2 and 3 concern interactions that \textit{could} be balanced given the alias structure in Table 3, and all of those interactions are balanced.  Appendix Tables 4 and 5 concern interactions that \textit{cannot} be balanced given the alias structure in Table 3, and all of those interactions are severely imbalanced.  When certain covariate combinations perfectly determine the treatment, those covariate combinations cannot be balanced.  In the four tables of covariates or interactions that could be balanced, there are a total of 504 covariate/block-type situations; there, the median of 504 $P$-values is 0.8871 and the minimum is 0.2379, so the balance is much better than expected by complete randomization.  In common practice, a linear model would not include so many covariates and interactions; yet here, all 504 are better balanced than by complete randomization.  In the two tables where aliasing forces imbalances, there are 252 covariate/block-type interactions, with a median $P$-value of 0.0000 and an upper quartile of 0.5708.  The upper quartile of 0.5708 reflects the fact that comparisons in certain block types are not aliased by eligibility covariates that create aliasing in other block types.  Covariates and interactions that were not aliased were balanced by matching, but covariates that are aliased cannot be controlled by any statistical method.

\vspace{-.5cm}

\subsection{Examining the Outcome: Duration of Unemployment}

Figure 3 compares two direct estimates of the effect of the increase in
benefits duration, B-versus-b, from blocks of types 2 and 5. \ However,
block types 2 and 5 are different. \ In blocks of type 2, the replacement
rate was increased in the after period, R, while in blocks of type 5 the
replacement rate not increased in the after period. \ The two left panels of
Figure 3 show the duration of unemployment, capped at 104 weeks or two
years. \ The contrast weights, $h_{g}=1$ or $h_{g}=-1$, appear at the tops
of the boxplots. \ An increase in benefit duration B in the after period is
associated with a somewhat longer duration of unemployment in the after
period, whether the replacement rate is increased (type 2 blocks) or not
(type 5 blocks). \ The distributions have long tails, somewhat obscuring the
typical durations.

Each block produces one difference-in-difference, or 700
difference-in-differences from blocks of type 2 and 700 from blocks of type
5. \ These 1400 differences-in-differences are plotted in the final boxplot.
\ To see the typical difference clearly, the vertical axis has its tails
symmetrically transformed beyond $\pm \beta $ using a $p=-1$ transformation,
where $\beta $ is the 0.8 quantile of the absolute difference-in-differences
(\cite{rosenbaum2022new}). \ The median of the 1400 untransformed
differences-in-differences is 2.3 weeks, the quartiles are $-7.6$ and 12.3,
and the 10\% and 90\% points are $-19.4$ and 23.8.

The 1400 blocks compare durations of unemployment in two blocked groups, $%
h_{g}=1$ and $h_{g}=-1$. \ Compared to a random allocation of 2 of 4
individuals to $h_{g}=1$, the 1400 difference-in-differences are high, with
one-sided $P$-value of $1.1\times 10^{-16}$. \ The comparison is insensitive
to a bias that increases the odds of assignment to $h_{g}=1$ rather than $%
h_{g}=-1$ by a factor of $\Gamma =1.6$, as the upper bound on the $P$-value
is then 0.035. \ These calculations use the methods in \cite%
{rosenbaum2018sensitivity,rosenbaum2024bahadur} as implemented in the 
\texttt{gwgtRank} function of the \texttt{weightedRank} package in \texttt{R}
with the default settings. \ In a matched pair, a bias of $\Gamma =1.6$ is
equivalent to an unobserved covariate that doubles the odds of treatment and
increases the odds of a positive pair difference in outcome by 5-fold (\cite%
{rosenbaum2009amplification}).  This sensitivity analysis presumes that (\ref{eqIwithAciPropensity}) and (\ref{eqIwithApositivePropensity}) are possibly false, but would have been true if the conditioning had included also the unobserved covariate.

The differences-in-differences are slightly but not significantly longer in
the 700 blocks of type 2, with an increase in the replacement rate, than in
the 700 blocks of type 5, without an increase: \ Wilcoxon's rank sum test
yields a $P$-value of 0.13 with a Hodge-Lehmann estimated difference of 1.28
weeks and a 95\% confidence interval of $\left[ -0.43,\,3.06\right] $ weeks.

\vspace{-.5cm}

\section{Construction of Balanced Blocks of Maximal Size}

\label{secMatching}

The block design described in Table 4 was assembled in three steps, building on the method of cardinality matching \citep{zubizarreta2014matching}. \ Steps 1 and 2 use an extension of cardinality
matching to select units; then, Step 3 uses minimum distance matching to put the
selected units into blocks so that the blocks are as homogenous as
possible with respect to covariates. \ Steps 1 and 2 constitute a $P$-way template-balanced partitioning program, where each treatment group is partitioned into $P$ samples of maximal size balanced with respect to a template. \ As seen in Figures 1-2, Table 5 and in
other tables in the appendix, the observed covariates exhibit good balance,
indeed better balance than is expected from complete randomization of these
same individuals to these same groups. \ Of course, as always, balance for
observed covariate does not imply balance for covariates that were not
observed. \ We first briefly review cardinality
matching, then explain the three steps. 

In forming a single matched sample, cardinality matching picks the largest
control group that satisfies certain requirements for covariate balance. \ 
\cite{niknam2022using} provide an informal and practical exposition of
cardinality matching. \ Cardinality matching differs from minimum distance
matching, in which the size of the design is fixed in advance and a minimum
distance match of that size is constructed. \ Both are discrete optimization
problems, but in cardinality matching the sample size is maximized subject
to constraints on covariate balance, while in minimum distance matching a
total within-block covariate distance is minimized subject to constraints on
sample size and covariate balance; see \cite{rosenbaum2023optimization}.

In the first applications of cardinality matching, controls were picked to
resemble a treated group. \ More recently, controls have been picked to
resemble summary statistics describing the distribution of covariates in a
target population called a template; see \cite{cohn2022profile} and \cite%
{silber2014template}. \ This creates the possibility of building more
complex designs, essentially because the size of the template no longer
plays a role in cardinality matching.

In cardinality matching, a balance constraint is a linear inequality
constraint. \ Let $s$ be the size of the matched control group. \ Two
inequality constraints can express a constraint on an absolute value:
specify an $\epsilon >0$; then two inequalities, $A\leq s\epsilon $ and $%
-A\leq s\epsilon $, jointly require $\left\vert A\right\vert \leq s\epsilon $%
. \ For example, we might require the mean age among selected controls to
differ from the mean age in the template by at most one year, $\epsilon =1$,
and that requirement $A$ can be expressed as as a requirement on a linear
quantity, $\left\vert A\right\vert \leq s\epsilon $, where $%
A=sB-\sum_{i=1}^{I}c_{i}\,a_{i}$, $B$ is the mean age in the template, $%
a_{i} $ is the age of the $i$th of $I$ available controls, $c_{i}=1$ if
control $i$ is selected into the control group and $c_{i}=0$ otherwise, and $%
s=\sum_{i=1}^{I}c_{i}$. \ There are many such linear requirements for many
covariates, all with the same $c_{i}$ and $s$. \ Cardinality matching picks
the $c_{i}\in \left\{ 0,\,1\right\} $ to maximize $s=\sum_{i=1}^{I}c_{i}$
subject to these linear constraints.

Building the design in Table 4 requires an adjustment to cardinality
matching. \ There are $4\times 2=8$ groups: $\mathrm{BR}$, $\mathrm{Br}$, $%
\mathrm{bR}$, $\mathrm{br}$, $\overline{\mathrm{BR}}$, $\overline{\mathrm{Br}%
}$, $\overline{\mathrm{bR}}$ and $\overline{\mathrm{br}}$. \ Focus on one
group, say $\mathrm{BR}$. \ Group $\mathrm{BR}$ needs to be represented
three times in three separate block types of size $s$, one for blocks of
type 1, one for blocks of type 2, and one for blocks of type 3. \ The
situation is essentially the same for the other 7 groups. \ Focus on group $%
\mathrm{BR}$ which contains $I$ individuals before matching. \ Let $c_{ip}=1$
if individual $i$ from BR is selected for group $p$, $p=1$, \ldots , $P$,
where $P=3$. \ To ensure that a member of BR is selected for at most one
group, there is a linear constraint $c_{i1}+\cdots +c_{iP}\leq 1$ for $%
i=1,\ldots ,\,I$. \ For the covariate age, there are now $P=3$ constraints
instead of one, $\left\vert A_{p}\right\vert \leq s\epsilon $ with $%
A_{p}=sB-\sum_{i=1}^{I}c_{ip}\,a_{i}$ for $p=1,\ldots ,\,P$. \ Finally, the $%
P$ groups must each be of size $s$, so there are $P$ size constraints, $%
s=\sum_{i=1}^{I}c_{ip}$ for $p=1,\ldots ,\,P$. \ The problem in step 1 is to
pick $c_{ip}\in \left\{ 0,\,1\right\} $ to maximize $s$ subject to the
stated constraints. \ That produces $P=3$ groups from $\mathrm{BR}$ that
each resemble the template in terms of the distribution of covariates. \ 

The problem just described is solved $4\times 2=8$ times, for $\mathrm{BR}$, 
$\mathrm{Br}$, $\mathrm{bR}$, $\mathrm{br}$, $\overline{\mathrm{BR}}$, $%
\overline{\mathrm{Br}}$, $\overline{\mathrm{bR}}$ and $\overline{\mathrm{br}}
$. This yields eight values of $s$, and the minimum value, $\overline{s}$,
of $s$ is selected. \ This completes Step 1.

Step 2 builds $4\times 2=8$ balanced groups all of the same size, $\overline{%
s}$. \ In Step 2, for each of the $4\times 2=8$ groups, the problem in Step
1 is solved again, but now with the size $\overline{s}$ that was determined
in Step 1; that is, with the fixed constraint $\overline{s}%
=\sum_{i=1}^{I}c_{ip}$ for $p=1,\ldots ,\,P$. \ This can be done in several
ways that differ slightly and are described in the Appendix. \ For example,
with $\overline{s}$ fixed, one could optimize another quantity, such as the
sum of suitably scaled $\epsilon $'s over all of the covariates. \ See the
appendix for options and details.

At the end of Step 2, there are 24 balanced groups. \
Four balanced groups are assigned to block types 1 to 6 according to the
plan in Table 4. \ Step 3 is to assemble the $4\overline{s}$ individuals in
each block type into $4\overline{s}$ blocks of 4 individuals, one from each
of the 4 balanced groups in that block type. \ Within block type 2, $%
\overline{\mathrm{BR}}$ is optimally paired with $\mathrm{BR}$, to minimize
a rank-based Mahalanobis covariate distance \cite[\S 9.3]%
{rosenbaum2020design}. \ Importantly, because the eligibility variables ---
i.e., $w^{\prime }=$ LE and $w^{\prime \prime }=$ IU and their continuous
counterparts Prior Wage and Relative Employment in Table 5 --- are \textit{%
not} aliased for $\overline{\mathrm{BR}}$ and $\mathrm{BR}$, the
before-after pairing of BR to produce $\overline{\mathrm{BR}}-\mathrm{BR}$\
pairs \textit{can} pair for these covariates. \ Similarly, within block type
2, $\overline{\mathrm{bR}}$ is optimally paired with $\mathrm{bR}$. \ Then
the $\overline{\mathrm{BR}}-\mathrm{BR}$ pairs are paired with the $%
\overline{\mathrm{bR}}-\mathrm{bR}$ pairs.
\ In pairing pairs, the distance between two pairs is the sum of the four
covariate distances that cross the pairs (c.f.,  \cite{nattino2021triplet}). \ In pairing pairs, some
eligibility covariates \textit{cannot} be included, depending upon the block
type.

\section{Discussion: Recap and Extensions}
\vspace{-.5cm}
\subsection{Deterministic Treatment Assignment and Aliasing}

With 50 covariates, there are $50+\binom{50}{2}=1275$ main effects and two
factor interactions, and there are $2^{50}-1=1.1\times 10^{15}$ main effects
and interactions of all kinds. \ In modelling to predict outcomes
from covariates or matching to balance covariates, we should
consider low-order interactions, but we have no realistic prospect of
evaluating all interactions.

Here we have considered study designs in which treatment assignment is a
deterministic function of some covariates, yet certain contrasts among
treatment groups are immune to the presence of certain interactions among
covariates, including some covariates that determine treatment. \ If the
four aliased covariates are excluded from Table 5, then 17 covariates
remain. \ In that case, expression (\ref{eqSpecialAlias}) can represent the
sum of $2^{17}\times \left( 1+3+2\right) =786432$ main effects and
interactions that are not aliased with contrast $\mathbf{h}$ in Table 3,
even though the two factor interaction $w^{\prime \prime }\times w^{\prime
\prime \prime }$ is aliased, as is every interaction of $w^{\prime \prime
}\times w^{\prime \prime \prime }$ with the 17 covariates. \ Propositions %
\ref{propEstimateContrast} and \ref{eqIwithAciPropensity} give conditions
that a certain contrast among treatment groups is estimable despite
deterministic assignment to those groups on the basis of covariates.

A main effect or interaction in (\ref{eqSpecialAlias}) that is not aliased
may nonetheless be imbalanced or confounded, but Tables 5 and 6 and tables
in the on-line appendix check many such covariates and interactions for
balance after blocking, without finding any evidence of imbalance. \ Of
course, covariate interactions that are aliased are, by definition, severely
imbalanced. \ For covariates and their interactions that are not aliased, it
is straightforward and appropriate to check that they are balanced by a
group contrast like $\mathbf{h}$ in Table 3; however, this is not commonly
done.

\subsection{Other Aspects of Classical Experiment Design}

We have illustrated the construction of an observational study from
administrative data with reference to some principles of classical
experimental design, including principles of aliasing and incomplete blocks.
\ There has been active work on experimental design by statisticians for
about a century, and our discussion of analogous observational studies has
considered only a few aspects.

One traditional topic that we have not explored is the so-called
\textquotedblleft recovery of inter-block information.\textquotedblright\ \
We estimated the BR-versus-bR effect directly within-blocks using blocks of
type 2 in Table 4. \ However, if one calculated the difference-in-differences in blocks of types 1 and 4, then the shift in the two
distributions of those two groups of differences-in-differences would also
estimate the B-versus-b effect, because Br appears twice and cancels. \ That
inter-block estimate places different requirements on $\beta \left( \mathbf{x%
},\mathbf{w}\right) $ and compares individuals in different blocks of four
individuals, but there may be some useful information there nonetheless.

\subsection{Aspects Outside Classical Experimental Design}

Certain aspects of our blocked design appear to have no analogous form in
experimental design, essentially because the observational treatment
assignment is found by the investigator, not created by the investigator. \
An investigator would not create some of the structures that she may find. \
In particular, as noted in \S \ref{secMatching}, the before-versus-after
pairing of individuals in the same $w^{\prime }\times w^{\prime \prime }$
category can and did match for both binary and continuous aliased covariates
(LE and IU, and Prior Wage and Relative Employment) that lack common support
as $w^{\prime }\times w^{\prime \prime }$ varies. \ One might say that in a
difference-in-difference, the inter-temporal difference is more closely
matched than the inter-eligibility difference.

\subsection{Observational Studies with More Complex Aliasing}

\label{ssMoreComplexA}

Is complex aliasing rare in observational studies? \ Is \cite%
{lalive2006changes}'s study unusual in having aliasing that is more complex
than conventional difference-in-differences? \ We suspect that complex
aliasing is often unnoticed, but not unusual. \ We most readily recognize in
data those patterns that we have already seen in mathematical structures. 

Even \cite{lalive2006changes}'s study can be represented as having more
complex aliasing than we have discussed. \ In on-line Appendix Table 1, a
B-versus-b increase in benefit duration varies in size from adding as little
as 9 weeks of additional benefits to as much as 32 weeks depending upon two
other covariates; so, this variation in treatment is not aliased with $%
w^{\prime }$ and $w^{\prime \prime }$ in Table 3, but is nonetheless aliased
with certain other covariates (which are balanced in Tables 5
and 6). \ If benefit duration affects the length of subsequent unemployment,
then one might expect to see evidence of that by comparing an increase from
30 to 39 weeks with an increase from 20 to 52 weeks; yet, this comparison
takes place within cells defined by $w^{\prime }\times w^{\prime \prime
}\times w^{\prime \prime \prime }$, and so is free of some of the aliasing
that we have discussed.

\vspace{-.5cm}

\bibliographystyle{agsm}

\bibliography{alias}

\singlespacing

\begin{table}[tbp] \centering%
\caption{A complete $2^3$ factorial and a 1/2 fraction in which main effects are not aliased with each other 
but are aliased with interactions.}\label{tabHalfFraction}$%
\begin{array}{rrrr}
i & A & B & C \\ 
1 & 1 & 1 & 1 \\ 
2 & 1 & 1 & -1 \\ 
3 & 1 & -1 & 1 \\ 
4 & 1 & -1 & -1 \\ 
5 & -1 & 1 & 1 \\ 
6 & -1 & 1 & -1 \\ 
7 & -1 & -1 & 1 \\ 
8 & -1 & -1 & -1%
\end{array}%
\Longrightarrow 
\begin{array}{rrrr}
i & A & B & C \\ 
1 & 1 & 1 & 1 \\ 
4 & 1 & -1 & -1 \\ 
6 & -1 & 1 & -1 \\ 
7 & -1 & -1 & 1%
\end{array}%
$
\end{table}%

\bigskip 

\begin{table}[tbp] \centering%
\caption{Design matrix  for difference-in-differences.  Because 
$w_{3}=(w_{1}+w_{1}\times w_{2})/2$, Treatment would be aliased if the
$w_{1} \times w_{2}$ interaction column were added to this design matrix.  Coding
``Treatment'' column as $(0,0,0,1)$ produces a
contrast that is proportional to $\mathbf{h}$.}\label{tabDinD}%
\begin{tabular}{rrrrrr}
$\mathbf{h}$ &  & $\mathbf{1}$ & Eligible, $w_{1}$ & 
Time, $w_{2}$ & Treatment, $w_{3}$ \\ \hline
1/2 & Ineligible-Before & 1 & -1 & -1 & 0 \\ 
-1/2 & Eligible-Before & 1 & 1 & -1 & 0 \\ 
-1/2 & Ineligible-After & 1 & -1 & 1 & -1 \\ 
1/2 & Eligible-After & 1 & 1 & 1 & 1%
\end{tabular}
\end{table}%

\begin{table}[tbp] \centering%
\caption{The design matrix for the observational study by Lalive et al. (2006).  Notice that this
design matrix exhibits less aliasing than the difference-in-difference design in Table 2.  For example, the longer 
benefits, B-vs-b, column is orthogonal to the interaction $w^{\prime} \times w^{\prime\prime\prime}$
for low earnings by time.}\label{tabDesignMatrix}$%
\begin{tabular}{rrccrrrrrr}
\hline
g & $\mathbf{h}$ &  &  & $\mathbf{1}$ & $\mathrm{B/b}$ & $\mathrm{R/r}$ & 
$\mathrm{LE}\text{, }w^{\prime }$ & $\mathrm{IU}\text{, }w^{\prime \prime }$ & 
$\mathrm{TIME}\text{, }w^{\prime \prime \prime }$ \\ \hline \\
1 & -1/4 & $\overline{\mathrm{BR}}$ &  & 1 & 0 & 0 & 1 & 1 & -1 \\ 
2 & -1/4 & $\overline{\mathrm{Br}}$ &  & 1 & 0 & 0 & -1 & 1 & -1 \\ 
3 & 1/4 & $\overline{\mathrm{bR}}$ &  & 1 & 0 & 0 & 1 & -1 & -1 \\ 
4 & 1/4 & $\overline{\mathrm{br}}$ &  & 1 & 0 & 0 & -1 & -1 & -1 \\ 
5 & 1/4 & $\mathrm{BR}$ &  & 1 & 1 & 1 & 1 & 1 & 1 \\ 
6 & 1/4 & $\mathrm{Br}$ &  & 1 & 1 & -1 & -1 & 1 & 1 \\ 
7 & -1/4 & $\mathrm{bR}$ &  & 1 & -1 & 1 & 1 & -1 & 1 \\ 
8 & -1/4 & $\mathrm{br}$ &  & 1 & -1 & -1 & -1 & -1 & 1 \\ \hline
\end{tabular}%
$%
\end{table}%

\begin{figure}
\centering
\includegraphics[scale = .8]{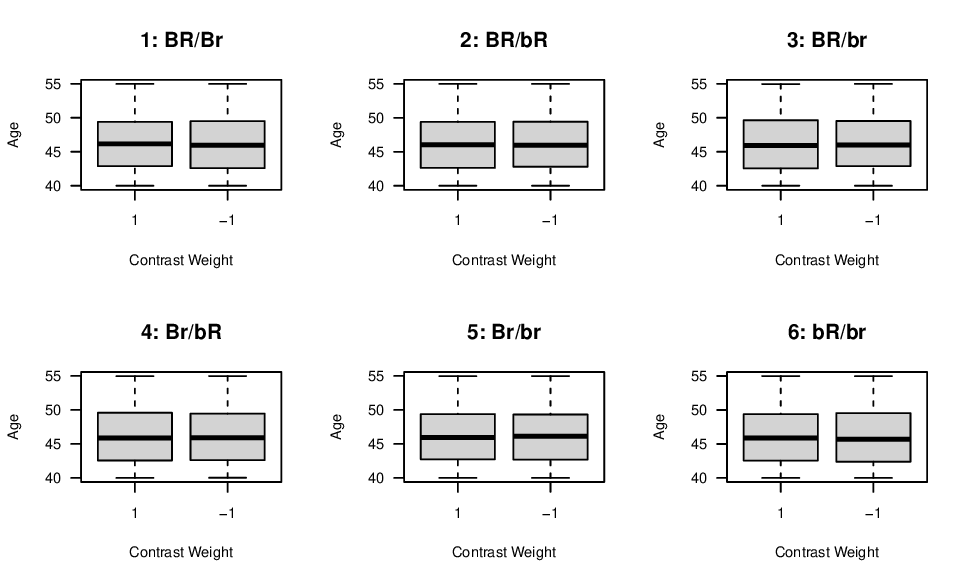}
\caption{Balance of the covariate ``age'' by contrast value, $+1$ or $-1$, in each of six type of blocks.}
\label{fig1}
\end{figure}

\begin{figure}
\centering
\includegraphics[scale = .8]{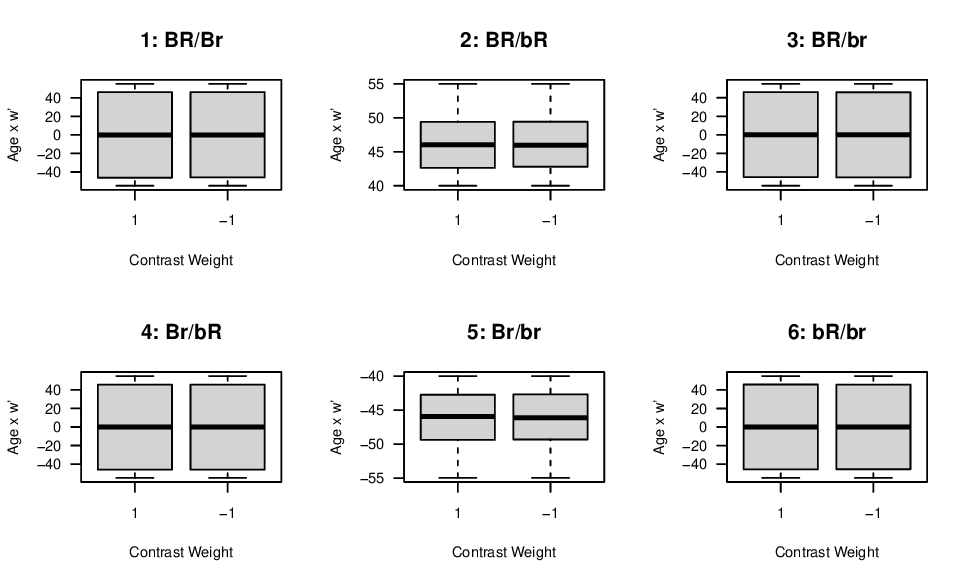}
\caption{Balance of the covariate ``age $\times$ LE'' or $x_{3} \times w^{\prime} $ by contrast value, $+1$ or $-1$, in each of six type of blocks.}
\label{fig2}
\end{figure}

\begin{table}[ht]
\centering
\caption{Six block types, 1 to 6, each
containing
two eligibility categories $\overline{\mathrm{Before}}$ and After two treatments were
applied in the After period.
Each block type occurs 700 times, so there are $4 \times
700 = 2800$ individuals in a
column, $6 \times
700 = 4200$ in a row, and $2800 \times 6 =16800$
in the study.}
\label{tabDesign}
\begin{tabular}{l|r|cccccc|r}
  \hline

&  & \multicolumn{6}{|c|}{ Eight Treatments in 6 Types of Blocks of Size 4} &  N
\\ \hline
& & 1 & 2 & 3 & 4 & 5 & 6 & \\
Time & $h_{g}$ & BR/Br & BR/bR & BR/br & Br/bR & Br/br & 
bR/br & \\ 
  \hline
After & 1 & $\mathrm{BR}$ &
$\mathrm{BR}$ & $\mathrm{BR}$ & $\mathrm{Br}$ & $\mathrm{Br}$ & $\mathrm{bR}$ & 4200 \\ 
After & $-1$ &
$\mathrm{Br}$ & $\mathrm{bR}$ & $\mathrm{br}$ & $\mathrm{bR}$ & $\mathrm{br}$ & $\mathrm{br}$ & 4200 \\

$\overline{\mathrm{Before}}$ & $-1$ & $\overline{\mathrm{BR}}$ & $\overline{\mathrm{BR}}$ & $\overline{\mathrm{BR}}$ &
$\overline{\mathrm{Br}}$ & $\overline{\mathrm{Br}}$ & $\overline{\mathrm{bR}}$ & 4200 \\

$\overline{\mathrm{Before}}$ & 1 & $\overline{\mathrm{Br}}$ & $\overline{\mathrm{bR}}$ &
$\overline{\mathrm{br}}$ & $\overline{\mathrm{bR}}$ & $\overline{\mathrm{br}}$ & $\overline{\mathrm{br}}$ & 4200 \\ 
\hline

N & & 2800 & 2800 & 2800 & 2800 & 2800 & 2800 & 16800 \\
   \hline
\end{tabular}
\end{table}%

\begin{table}[ht]
\centering
\caption{P-values checking covariate imbalance, $h_{g}=1$ versus $h_{g}=-1$,
compared with complete randomization, in each of 6 block types, plus
\textit{combining} (C) across 6 block types using the truncated product method.}
\label{tabCovBal}
\begin{tabular}{r|rrrrrr|r}
  \hline
Covariate & \multicolumn{6}{|c|}{ 6 Types of Blocks } &  \\ \hline
 & 1 & 2 & 3 & 4 & 5 & 6 & C \\ 
 & BR/Br & BR/bR & BR/br & Br/bR & Br/br & 
bR/br & \\
  \hline
Temporary Layoff & 0.78 & 0.60 & 0.94 & 0.90 & 0.68 & 0.65 & 1.00 \\ 
  Female & 0.94 & 1.00 & 0.82 & 1.00 & 0.97 & 0.88 & 1.00 \\ 
  Age & 0.58 & 0.99 & 0.67 & 0.94 & 0.77 & 0.85 & 1.00 \\ 
  Secondary Education & 1.00 & 0.24 & 0.86 & 0.74 & 0.36 & 0.49 & 1.00 \\ 
  Tertiary Education & 1.00 & 1.00 & 1.00 & 1.00 & 1.00 & 1.00 & 1.00 \\ 
  Apprenticeship & 0.90 & 0.67 & 0.70 & 0.58 & 0.83 & 0.87 & 1.00 \\ 
  Married & 0.81 & 0.97 & 0.94 & 0.90 & 0.97 & 0.94 & 1.00 \\ 
  Single & 0.46 & 0.95 & 0.61 & 0.87 & 0.65 & 0.91 & 1.00 \\ 
  Divorced & 1.00 & 0.92 & 0.67 & 0.79 & 0.75 & 0.79 & 1.00 \\ 
  Female x Married & 0.82 & 0.33 & 1.00 & 0.85 & 0.75 & 0.76 & 1.00 \\ 
  Female x Single & 0.93 & 0.42 & 1.00 & 1.00 & 0.93 & 0.92 & 1.00 \\ 
  Female x Divorced & 0.69 & 0.64 & 0.28 & 1.00 & 0.69 & 0.90 & 1.00 \\ 
  Blue Collar Job & 0.63 & 0.79 & 0.97 & 0.54 & 0.90 & 0.57 & 1.00 \\ 
  Seasonal Job & 0.70 & 0.97 & 0.97 & 0.94 & 0.49 & 0.64 & 1.00 \\ 
  Manufacturing Job & 0.72 & 0.57 & 0.96 & 0.96 & 0.58 & 0.53 & 1.00 \\ 
  Prior Wage & 0.94 & 0.52 & 0.66 & 0.88 & 0.91 & 0.79 & 1.00 \\ 
  (LE) Prior Wage $\leq$ 12610 & 1.00 & 1.00 & 1.00 & 1.00 & 1.00 & 1.00 & 1.00 \\ 
  Relative Employment (RE) & 0.90 & 0.98 & 0.85 & 0.91 & 0.93 & 0.95 & 1.00 \\ 
  (IU) RE $\geq$ 40\% & 1.00 & 1.00 & 1.00 & 1.00 & 1.00 & 1.00 & 1.00 \\ 
  Worked 3 of 5 Years & 0.97 & 1.00 & 0.97 & 0.81 & 0.81 & 0.97 & 1.00 \\ 
  Age $\geq$ 50 & 0.71 & 0.93 & 0.96 & 0.71 & 0.89 & 0.68 & 1.00 \\
   \hline
\end{tabular}
\end{table}

\begin{figure}
\centering
\includegraphics[scale = .82]{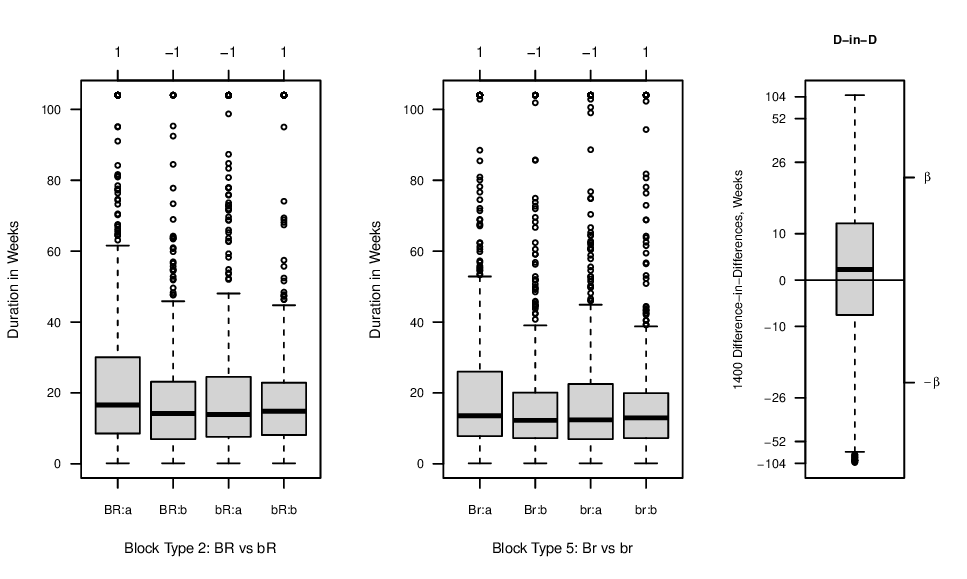}
\caption{Duration of unemployment, with and without an increase in benefits duration (B/b) in block type 2 at high R and block type 5 at low R, after (a) and before (b). 
Contrast weights, $h_{g}=1$ or $h_{g}=-1$, appear at the top.  The right boxplot
shows $1400 = 700 + 700$ difference-in-difference estimates, pooling blocks
of types 2 and 5, with symmetric transformation of the tails
to better visualize the center of the distribution.}
\label{fig3}
\end{figure}

\begin{table}[h]
\centering
\caption{P-values checking covariate imbalance, $h_{g}=1$ versus $h_{g}=-1$,
for covariates interacted with LE, $x_{k} \times w^{\prime}$, compared with complete randomization, in each of 6 block types, plus
combining across 6 block types (C).}
\label{tabCovBal2}
\begin{tabular}{r|rrrrrr|r}
  \hline
Covariate & \multicolumn{6}{|c|}{ 6 Types of Blocks } &  \\ \hline
 & 1 & 2 & 3 & 4 & 5 & 6 & C \\ 
 & BR/Br & BR/bR & BR/br & Br/bR & Br/br & 
bR/br & \\
  \hline
Temporary Layoff & 0.62 & 0.60 & 0.33 & 0.38 & 0.68 & 0.73 & 1.00 \\ 
  Female & 0.94 & 1.00 & 0.76 & 1.00 & 0.97 & 0.88 & 1.00 \\ 
  Age & 0.87 & 0.99 & 0.81 & 0.94 & 0.77 & 0.97 & 1.00 \\ 
  Secondary Education & 0.94 & 0.24 & 0.41 & 0.88 & 0.36 & 0.55 & 1.00 \\ 
  Tertiary Education & 1.00 & 1.00 & 1.00 & 1.00 & 1.00 & 1.00 & 1.00 \\ 
  Apprenticeship & 0.62 & 0.67 & 0.79 & 0.57 & 0.83 & 0.82 & 1.00 \\ 
  Married & 0.82 & 0.97 & 0.94 & 0.91 & 0.97 & 0.94 & 1.00 \\ 
  Single & 0.91 & 0.95 & 0.79 & 0.85 & 0.65 & 1.00 & 1.00 \\ 
  Divorced & 0.71 & 0.92 & 0.88 & 0.85 & 0.75 & 0.85 & 1.00 \\ 
  Female x Married & 0.88 & 0.33 & 0.94 & 0.52 & 0.75 & 1.00 & 1.00 \\ 
  Female x Single & 0.97 & 0.42 & 0.71 & 1.00 & 0.93 & 0.97 & 1.00 \\ 
  Female x Divorced & 0.65 & 0.64 & 0.82 & 0.82 & 0.69 & 0.82 & 1.00 \\ 
  Blue Collar Job & 0.73 & 0.79 & 0.97 & 0.60 & 0.90 & 0.68 & 1.00 \\ 
  Seasonal Job & 0.71 & 0.97 & 0.97 & 0.94 & 0.49 & 0.65 & 1.00 \\ 
  Manufacturing Job & 0.79 & 0.57 & 0.62 & 0.97 & 0.58 & 0.65 & 1.00 \\ 
  Prior Wage & 0.81 & 0.52 & 0.71 & 0.81 & 0.91 & 0.95 & 1.00 \\ 
  (LE) Prior Wage $\leq$ 12610 & 1.00 & 1.00 & 1.00 & 1.00 & 1.00 & 1.00 & 1.00 \\ 
  Relative Employment (RE) & 0.64 & 0.98 & 0.88 & 0.88 & 0.93 & 0.99 & 1.00 \\ 
  (IU) RE $\geq$ 40\% & 1.00 & 1.00 & 1.00 & 1.00 & 1.00 & 1.00 & 1.00 \\ 
  Worked 3 of 5 Years & 0.97 & 1.00 & 0.97 & 0.82 & 0.81 & 0.97 & 1.00 \\ 
  Age $\geq$ 50 & 0.82 & 0.93 & 0.97 & 0.50 & 0.89 & 0.97 & 1.00 \\ 
   \hline
\end{tabular}
\end{table}

\singlespacing

\section*{Appendix 1: Benefit Details}

The benefit changes were fairly subtle, as described in Table \ref{tabBenefits}.

Whether benefit duration increased from before to after depended upon pnonGE40, which is called ``Relative Employment $\geq$ 40'' in the balance tables.  Depending upon age, pnonGE40 indicates whether you has worked for at least 60\% of the past 10 or 15 years.  If yes, you received an increase in benefit duration; otherwise, no increase.  The size of the increase, if any, was larger if you were at least 50 years old.

Whether the replacement rate was increased depended upon the wages in Austrian Shillings in the prior job, wagePJGE12610, which is called ``Prior Wage $\leq$ 12610''.  If your wages were low in your prior job, you received a larger amount of money.

Also important were the benefits available in the before period.  After all, a change in benefits is more likely to influence behavior if the levels before and after are very different.

Before the benefit changes, the benefit duration depended upon whether you had worked for at least 3 of the last 5 years (e3-5).  

The matching tried to balance all of these variables.

\section*{Appendix 2: Further Balance Checks}

Tables \ref{tabCovBal3} -- \ref{tabCovBal6} expand the checks on covariate balance to include interactions with $w^{\prime}$ or $w^{\prime\prime}$.  Tables \ref{tabCovBal3} -- \ref{tabCovBal4} check interactions that \textit{could} be balanced given the alias structure, and these tables show that the matching did, indeed, balance them.  Tables \ref{tabCovBal5} -- \ref{tabCovBal6} check interactions that \textit{cannot} be balanced given the alias structure, and these tables show that the matching did not balance them. When certain covariate combinations perfectly control treatment assignment, those covariate combinations must exibit imbalances.

\section*{Appendix 3: Three Step Matching Method}

The mean of covariate $k$ in the template is $B_{k}$. \ A tolerance, $%
\epsilon _{k}>0$ is specified for covariate $k$. \ For this purpose, it may
be best to standardize every covariate to have mean $B_{k}=0$ and standard
deviation $1$ in the template, so $\epsilon _{k}$ is the tolerable
difference in means in units of the standard deviation; however, this is for
ease of interpretation and is not essential. \ For instance, with this
standardization, one might set $\epsilon _{k}=0.05$ for all $k$, as this
would be a small difference in means which is 5\% of the standard deviation
in the template.

Step 1 solves the following integer program for each treatment group, where
there were $T=8=4\times 2$ treatment groups in the main paper. \ In the
paper, $P=3$. \ Other quantities, like $I$, vary with the treatment group. \
The value of covariate $k$ for person $i$ in this treatment group is $x_{ik}$

\pagebreak

The integer program is:

\[
\max s 
\]%
subject to%
\begin{eqnarray*}
s &=&\sum_{i=1}^{I}c_{ip}\text{, }p=1,\ldots ,\,P \\
1 &\geq &\sum_{p=1}^{P}c_{ip}\text{, }i=1,\ldots ,\,I\text{,} \\
\epsilon _{k}\,s &\geq &\left\vert
sB_{k}-\sum_{i=1}^{I}c_{ip}\,x_{ik}\right\vert \text{, }k=1,\ldots ,\,K\text{%
, }p=1,\ldots ,\,P\text{,} \\
c_{ip} &\in &\left\{ 0,1\right\} \text{, }i=1,\ldots ,\,I\text{, }p=1,\ldots
,\,P\text{.}
\end{eqnarray*}

This results in $T$ maximized values of $s$, one for each treatment group. \
The final group size is the minimum of these $T$ values of $s$, say $%
\overline{s}$.

In Step 2, $T$ groups of the same size $\overline{s}$ are selected. \ One
approach takes $\epsilon _{k}$ as a free variable and solves $T$ times the
following mixed integer program:%
\[
\min \sum_{k=1}^{K}\epsilon _{k} 
\]%
subject to%
\begin{eqnarray*}
\overline{s} &=&\sum_{i=1}^{I}c_{ip}\text{, }p=1,\ldots ,\,P \\
1 &\geq &\sum_{p=1}^{P}c_{ip}\text{, }i=1,\ldots ,\,I\text{,} \\
\epsilon _{k}\,\overline{s} &\geq &\left\vert \overline{s}%
B_{k}-\sum_{i=1}^{I}c_{ip}\,x_{ik}\right\vert \text{, }k=1,\ldots ,\,K\text{%
, }p=1,\ldots ,\,P\text{,} \\
c_{ip} &\in &\left\{ 0,1\right\} \text{, }i=1,\ldots ,\,I\text{, }p=1,\ldots
,\,P\text{,} \\
\epsilon _{k} &\geq &0\text{ for }k=1,\ldots ,\,K\text{.}
\end{eqnarray*}

Conceptually, this program is attractive for two reasons. \ First, if $s>%
\overline{s}$ in step 1, then better balance may be possible: it may be
possible to reduce many $\epsilon _{k}$ if we want only $P\overline{s}$
individuals rather than $Ps>P\overline{s}$ individuals. \ Second, in
principle, it is conceivable --- if unlikely --- that reducing the size from 
$s$ to $\overline{s}$ makes it harder to balance some covariate $x_{ik}$,
and this mixed integer program minimizes the extent of any problems of this
sort.

An alternative approach is to omit the objective function in this program
and simply search for a feasible solution, one that satisfies the
constraints. \ In this case, the prespecified tolerances $\epsilon _{k}$
from Step 1 are used. \ After all, the original goal in Step 1 was simply to
satisfy the specified balance constraints. \ We know from Step 1 that it is
impossible to satisfy the specified balance constraints for all $T$ groups
if $s>\overline{s}$; so, a feasible solution, if it exists, is the largest
design with groups of equal size that does satisfy the specified balance
constraints for all $T$ groups. \ This was the approach taken in the paper.

Step 3 is a sequence of entirely standard optimal assignment problems.

Optimization was performed using \texttt{gurobi} in \texttt{R}.

\begin{table}[ht]
\centering\caption
{Benefit changes from before (B) to after (A) August 1989, expressed as the number of months of
benefits, with a plus (+) signifying an increase in the monetary benefit amount or ``replacement rate''.  Treatment PBD means an
increase in the benefit duration, while RR means an increase in the monetary amount.  Covariates
$\mathcal{T}$ determine the treatment change that is applied in the period after August 1989.
Covariates $\mathcal{B}$ determine the benefits in place before August 1989, and hence
define the baseline for the change from before to after.}
\label{tabBenefits}
\begin{tabular}{rrr|rrrr}
\hline& &  & \multicolumn{4}{|c}{Covariates $\mathcal{T}$}  \\ \hline
\multicolumn{3}{c|}{Worked in 10 or 15 years (pnonGE40) }       &  $\geq
60\% $ &  $\geq60\% $ &  $< 60\% $ &  $< 60\% $ \\
\multicolumn{3}{c|}{Wage in Prior Job (wagePJGE12610)}       & $\leq
12610 $ & $ > 12610 $ & $\leq12610 $  & $ > 12610 $ \\ \hline\multicolumn
{2}{c|}{Covariates $\mathcal{B}$} &  & \multicolumn{4}{|c}{Treatment Groups}
\\ \hline
Age  & Worked Years (e3-5) & After & PBD+RR & PBD & RR & Control \\
\hline$< 50$ &  $<$ 3 of 5  & B & 20 & 20 & 20 & 20 \\
$< 50$ &  $<$ 3 of 5  & A & +39 & 39 & +20 & 20 \\ \hline$< 50$ &  $\geq
$ 3 of 5  & B & 30 & 30 & 30 & 30 \\
$< 50$ &  $\geq$ 3 of 5  & A & +39 & 39 & +30 & 30 \\ \hline$\geq
50$ &  $<$ 3 of 5  & B & 20 & 20 & 20 & 20 \\
$\geq50$ &  $<$ 3 of 5  & A & +52 & 52 & +20 & 20 \\ \hline$\geq
50$ &  $\geq$ 3 of 5  & B & 30 & 30 & 30 & 30 \\
$\geq50$ &  $\geq$ 3 of 5  & A & +52 & 52 & +30 & 30 \\ \hline
& & & \multicolumn{4}{|c}{Treatment Groups}  \\
& & B & $\overline{\mathrm{BR}}$ & $\overline{\mathrm{Br}}$ & $\overline
{\mathrm{bR}}$ & $\overline{\mathrm{br}}$  \\
& & A & BR & Br & bR & br  \\
\hline\end{tabular}
\end{table}

\begin{table}[h]
\centering
\caption{P-values checking covariate imbalance, $h_{g}=1$ versus $h_{g}=-1$,
for covariates interacted with IU, $x_{k} \times w^{\prime\prime}$, 
compared with complete randomization, in each of 6 block types, plus
combining across 6 block types (C).}
\label{tabCovBal3}
\begin{tabular}{r|rrrrrr|r}
  \hline
Covariate & \multicolumn{6}{|c|}{ 6 Types of Blocks } &  \\ \hline
 & 1 & 2 & 3 & 4 & 5 & 6 & C \\ 
 & BR/Br & BR/bR & BR/br & Br/bR & Br/br & 
bR/br & \\
  \hline
Temporary Layoff & 0.78 & 0.57 & 0.33 & 0.38 & 0.50 & 0.65 & 1.00 \\ 
  Female & 0.94 & 0.94 & 0.76 & 1.00 & 0.91 & 0.88 & 1.00 \\ 
  Age & 0.58 & 0.93 & 0.81 & 0.94 & 0.98 & 0.85 & 1.00 \\ 
  Secondary Education & 1.00 & 0.62 & 0.41 & 0.88 & 0.73 & 0.49 & 1.00 \\ 
  Tertiary Education & 1.00 & 1.00 & 1.00 & 1.00 & 1.00 & 1.00 & 1.00 \\ 
  Apprenticeship & 0.90 & 0.71 & 0.79 & 0.57 & 0.57 & 0.87 & 1.00 \\ 
  Married & 0.81 & 0.97 & 0.94 & 0.91 & 0.91 & 0.94 & 1.00 \\ 
  Single & 0.46 & 0.97 & 0.79 & 0.85 & 0.82 & 0.91 & 1.00 \\ 
  Divorced & 1.00 & 0.94 & 0.88 & 0.85 & 0.82 & 0.79 & 1.00 \\ 
  Female x Married & 0.82 & 0.97 & 0.94 & 0.52 & 0.76 & 0.76 & 1.00 \\ 
  Female x Single & 0.93 & 0.73 & 0.71 & 1.00 & 0.97 & 0.92 & 1.00 \\ 
  Female x Divorced & 0.69 & 0.62 & 0.82 & 0.82 & 1.00 & 0.90 & 1.00 \\ 
  Blue Collar Job & 0.63 & 0.55 & 0.97 & 0.60 & 0.97 & 0.57 & 1.00 \\ 
  Seasonal Job & 0.70 & 0.97 & 0.97 & 0.94 & 0.71 & 0.64 & 1.00 \\ 
  Manufacturing Job & 0.72 & 0.73 & 0.62 & 0.97 & 0.68 & 0.53 & 1.00 \\ 
  Prior Wage & 0.94 & 0.93 & 0.71 & 0.81 & 0.72 & 0.79 & 1.00 \\ 
  (LE) Prior Wage $\leq$ 12610 & 1.00 & 1.00 & 1.00 & 1.00 & 1.00 & 1.00 & 1.00 \\ 
  Relative Employment (RE) & 0.90 & 0.88 & 0.88 & 0.88 & 0.84 & 0.95 & 1.00 \\ 
  (IU) RE $\geq$ 40\% & 1.00 & 1.00 & 1.00 & 1.00 & 1.00 & 1.00 & 1.00 \\ 
  Worked 3 of 5 Years & 0.97 & 1.00 & 0.97 & 0.82 & 0.82 & 0.97 & 1.00 \\ 
  Age $\geq$ 50 & 0.71 & 0.94 & 0.97 & 0.50 & 0.62 & 0.68 & 1.00 \\ 
   \hline
\end{tabular}
\end{table}

\begin{table}[h]
\centering
\caption{P-values checking covariate imbalance, $h_{g}=1$ versus $h_{g}=-1$,
for covariates interacted with \textit{both} LE and IU, $x_{k} \times w^{\prime} \times w^{\prime\prime}$, 
compared with complete randomization, in each of 6 block types, plus
combining across 6 block types (C).}
\label{tabCovBal4}
\begin{tabular}{r|rrrrrr|r}
  \hline
Covariate & \multicolumn{6}{|c|}{ 6 Types of Blocks } &  \\ \hline
 & 1 & 2 & 3 & 4 & 5 & 6 & C \\ 
 & BR/Br & BR/bR & BR/br & Br/bR & Br/br & 
bR/br & \\ 
  \hline
Temporary Layoff & 0.62 & 0.57 & 0.94 & 0.90 & 0.50 & 0.73 & 1.00 \\ 
  Female & 0.94 & 0.94 & 0.82 & 1.00 & 0.91 & 0.88 & 1.00 \\ 
  Age & 0.87 & 0.93 & 0.67 & 0.94 & 0.98 & 0.97 & 1.00 \\ 
  Secondary Education & 0.94 & 0.62 & 0.86 & 0.74 & 0.73 & 0.55 & 1.00 \\ 
  Tertiary Education & 1.00 & 1.00 & 1.00 & 1.00 & 1.00 & 1.00 & 1.00 \\ 
  Apprenticeship & 0.62 & 0.71 & 0.70 & 0.58 & 0.57 & 0.82 & 1.00 \\ 
  Married & 0.82 & 0.97 & 0.94 & 0.90 & 0.91 & 0.94 & 1.00 \\ 
  Single & 0.91 & 0.97 & 0.61 & 0.87 & 0.82 & 1.00 & 1.00 \\ 
  Divorced & 0.71 & 0.94 & 0.67 & 0.79 & 0.82 & 0.85 & 1.00 \\ 
  Female x Married & 0.88 & 0.97 & 1.00 & 0.85 & 0.76 & 1.00 & 1.00 \\ 
  Female x Single & 0.97 & 0.73 & 1.00 & 1.00 & 0.97 & 0.97 & 1.00 \\ 
  Female x Divorced & 0.65 & 0.62 & 0.28 & 1.00 & 1.00 & 0.82 & 1.00 \\ 
  Blue Collar Job & 0.73 & 0.55 & 0.97 & 0.54 & 0.97 & 0.68 & 1.00 \\ 
  Seasonal Job & 0.71 & 0.97 & 0.97 & 0.94 & 0.71 & 0.65 & 1.00 \\ 
  Manufacturing Job & 0.79 & 0.73 & 0.96 & 0.96 & 0.68 & 0.65 & 1.00 \\ 
  Prior Wage & 0.81 & 0.93 & 0.66 & 0.88 & 0.72 & 0.95 & 1.00 \\ 
  (LE) Prior Wage $\leq$ 12610 & 1.00 & 1.00 & 1.00 & 1.00 & 1.00 & 1.00 & 1.00 \\ 
  Relative Employment (RE) & 0.64 & 0.88 & 0.85 & 0.91 & 0.84 & 0.99 & 1.00 \\ 
  (IU) RE $\geq$ 40\% & 1.00 & 1.00 & 1.00 & 1.00 & 1.00 & 1.00 & 1.00 \\ 
  Worked 3 of 5 Years & 0.97 & 1.00 & 0.97 & 0.81 & 0.82 & 0.97 & 1.00 \\ 
  Age $\geq$ 50 & 0.82 & 0.94 & 0.96 & 0.71 & 0.62 & 0.97 & 1.00 \\ 
   \hline
\end{tabular}
\end{table}

\begin{table}[h]
\centering
\caption{Unlike previous balance tables, this table shows that certain interactions \textit{are} aliased, as described in the main manuscript.  P-values checking covariate imbalance, $h_{g}=1$ versus $h_{g}=-1$,
for covariates interacted with \textit{both} LE and Time, $x_{k} \times w^{\prime} \times w^{\prime\prime\prime}$, 
compared with complete randomization, in each of 6 block types, plus
combining across 6 block types (C).  Recall that $w^{\prime}$ is LE, and
$w^{\prime\prime}$ is IU.  There is covariate balance for block types 2 and 5, except for the RE/IU variables, but there is severe aliasing in other block types. In
block types 2 and 5, $w^{\prime}$ or LE is constant.}
\label{tabCovBal5}
\begin{tabular}{r|rrrrrr|r}
   \hline
Covariate & \multicolumn{6}{|c|}{ 6 Types of Blocks } &  \\ \hline
 & 1 & 2 & 3 & 4 & 5 & 6 & C \\ 
 & BR/Br & BR/bR & BR/br & Br/bR & Br/br & 
bR/br & \\
  \hline
Temporary Layoff & 0.00 & 0.62 & 0.00 & 0.00 & 0.71 & 0.00 & 0.00 \\ 
  Female & 0.00 & 1.00 & 0.00 & 0.00 & 0.97 & 0.00 & 0.00 \\ 
  Age & 0.00 & 0.60 & 0.00 & 0.00 & 0.64 & 0.00 & 0.00 \\ 
  Secondary Education & 0.00 & 0.57 & 0.00 & 0.00 & 0.57 & 0.00 & 0.00 \\ 
  Tertiary Education & 0.00 & 1.00 & 0.00 & 0.00 & 1.00 & 0.00 & 0.00 \\ 
  Apprenticeship & 0.00 & 0.55 & 0.00 & 0.00 & 0.73 & 0.00 &  0.00 \\ 
  Married & 0.00 & 0.97 & 0.00 & 0.00 & 0.43 & 0.00 & 0.00 \\ 
  Single & 0.00 & 0.52 & 0.00 & 0.00 & 0.55 & 0.00 & 0.00 \\ 
  Divorced & 0.00 & 0.33 & 0.00 & 0.00 & 0.55 & 0.00 & 0.00 \\ 
  Female x Married & 0.00 & 0.91 & 0.00 & 0.00 & 0.76 & 0.00 & 0.00 \\ 
  Female x Single & 0.00 & 0.47 & 0.00 & 0.00 & 0.62 & 0.00 & 0.00 \\ 
  Female x Divorced & 0.00 & 0.57 & 0.00 & 0.00 & 0.71 & 0.00 & 0.00 \\ 
  Blue Collar Job & 0.00 & 0.76 & 0.00 & 0.00 & 0.97 & 0.00 & 0.00 \\ 
  Seasonal Job & 0.00 & 0.97 & 0.00 & 0.00 & 0.88 & 0.00 & 0.00 \\ 
  Manufacturing Job & 0.00 & 0.52 & 0.00 & 0.00 & 0.62 & 0.00 & 0.00 \\ 
  Prior Wage & 0.00 & 0.95 & 0.00 & 0.00 & 0.45 & 0.00 & 0.00 \\ 
  (LE) Prior Wage $\leq$ 12610 & 1.00 & 1.00 & 1.00 & 1.00 & 1.00 & 1.00 & 1.00 \\ 
  Relative Employment (RE) & 0.00 & 0.00 & 0.00 & 0.00 & 0.00 & 0.00 & 0.00 \\ 
  (IU) RE $\geq$ 40\% & 0.00 & 0.00 & 1.00 & 1.00 & 0.00 & 0.00 & 0.00 \\ 
  Worked 3 of 5 Years & 0.00 & 0.11 & 0.00 & 0.00 & 0.20 & 0.00 & 0.00 \\ 
  Age $\geq$ 50 & 0.00 & 0.94 & 0.00 & 0.00 & 0.62 & 0.00 &  0.00\\
   \hline
\end{tabular}
\end{table}

\begin{table}[h]
\centering
\caption{Unlike most previous balance tables, this table shows that certain interactions \textit{are} aliased, as described in the main manuscript.  P-values checking covariate imbalance, $h_{g}=1$ versus $h_{g}=-1$,
for covariates interacted with \textit{both} IU and Time, $x_{k} \times w^{\prime\prime} \times w^{\prime\prime\prime}$, 
compared with complete randomization, in each of 6 block types, plus
combining across 6 block types (C).  Recall that $w^{\prime}$ is LE, and
$w^{\prime\prime}$ is IU.  There is covariate balance for block types 1 and 6, except for the Prior Wage/LE variables, but there is severe aliasing in other block types. In
block types 1 and 6, $w^{\prime\prime}$ or IU is constant.}
\label{tabCovBal6}
\begin{tabular}{r|rrrrrr|r}
   \hline
Covariate & \multicolumn{6}{|c|}{ 6 Types of Blocks } &  \\ \hline
 & 1 & 2 & 3 & 4 & 5 & 6 & C \\ 
 & BR/Br & BR/bR & BR/br & Br/bR & Br/br & 
bR/br & \\
  \hline
Temporary Layoff & 0.57 & 0.00 & 0.00 & 0.00 & 0.00 & 0.79 & 0.00  \\ 
  Female & 0.45 & 0.00 & 0.00 & 0.00 & 0.00 & 0.36 & 0.00 \\ 
  Age & 0.78 & 0.00 & 0.00 & 0.00 & 0.00 & 0.96 & 0.00  \\ 
  Secondary Education & 0.33 & 0.00 & 0.00 & 0.00 & 0.00 & 1.00 & 0.00  \\ 
  Tertiary Education & 1.00 & 0.00 & 0.00 & 0.00 & 0.00 & 1.00 & 0.00  \\ 
  Apprenticeship & 0.57 & 0.00 & 0.00 & 0.00 & 0.00 & 0.50 & 0.00  \\ 
  Married & 0.41 & 0.00 & 0.00 & 0.00 & 0.00 & 0.94 & 0.00 \\ 
  Single & 0.97 & 0.00 & 0.00 & 0.00 & 0.00 & 0.94 & 0.00  \\ 
  Divorced & 0.88 & 0.00 & 0.00 & 0.00 & 0.00 & 0.43 & 0.00  \\ 
  Female x Married & 0.45 & 0.00 & 0.00 & 0.00 & 0.00 & 0.45 & 0.00 \\ 
  Female x Single & 0.31 & 0.00 & 0.00 & 0.00 & 0.00 & 0.91 & 0.00  \\ 
  Female x Divorced & 0.88 & 0.00 & 0.00 & 0.00 & 0.00 & 0.65 & 0.00  \\ 
  Blue Collar Job & 0.73 & 0.00 & 0.00 & 0.00 & 0.00 & 0.62 & 0.00  \\ 
  Seasonal Job & 0.60 & 0.00 & 0.00 & 0.00 & 0.00 & 0.55 & 0.00 \\ 
  Manufacturing Job & 0.57 & 0.00 & 0.00 & 0.00 & 0.00 & 0.65 & 0.00  \\ 
  Prior Wage & 0.00 & 0.00 & 0.00 & 0.00 & 0.00 & 0.00 & 0.00  \\ 
  (LE) Prior Wage $\leq$ 12610 & 0.00 & 0.00 & 1.00 & 1.00 & 0.00 & 0.00 & 0.00 \\ 
  Relative Employment (RE) & 0.51 & 0.00 & 0.00 & 0.00 & 0.00 & 0.88 & 0.00  \\ 
  (IU) RE $\geq$ 40\% & 1.00 & 1.00 & 1.00 & 1.00 & 1.00 & 1.00 & 1.00 \\ 
  Worked 3 of 5 Years & 0.97 & 0.00 & 0.00 & 0.00 & 0.00 & 0.97 & 0.00 \\ 
  Age $\geq$ 50 & 0.82 & 0.00 & 0.00 & 0.00 & 0.00 & 0.97 & 0.00 \\
   \hline
\end{tabular}
\end{table}

\section*{Appendix 4: Construction of the Template}

As described above, each treatment group is partitioned into $P = 3$ disjoint samples of maximal and equal size that are balanced with respect to a summary of a reference distribution or template.
The template was computed as the simple means of the covariates across all treatment groups. 
The template is displayed in Table \ref{tabTemplate}.

\begin{table}[ht]
\centering
\caption{Description of the template, computed as the simple means of the covariates across all treatment groups.}
\label{tabTemplate}
\begin{tabular}{rr}
  \hline
Covariate & Mean \\ 
  \hline
Temporary Layoff & 0.32 \\ 
Female & 0.55 \\ 
Age & 46.30 \\ 
Secondary Education & 0.05 \\ 
Tertiary Education & 0.04 \\ 
Apprenticeship & 0.28 \\ 
Married & 0.68 \\ 
Single & 0.13 \\ 
Divorced & 0.15 \\ 
Female x Married & 0.38 \\ 
Female x Single & 0.05 \\ 
Female x Divorced & 0.09 \\ 
Blue Collar Job & 0.75 \\ 
Seasonal Job & 0.40 \\ 
Manufacturing Job & 0.16 \\ 
Age $>$= 50 x Worked 3 of 5 Years = 00 & 0.29 \\ 
Age $>$= 50 x Worked 3 of 5 Years = 01 & 0.50 \\ 
Age $>$= 50 x Worked 3 of 5 Years = 10 & 0.07 \\ 
Age $>$= 50 x Worked 3 of 5 Years = 11 & 0.15 \\ 
Relative Employment & 0.39 \\ 
Prior Wage & 14540.20 \\ 
   \hline
\end{tabular}
\end{table}



\end{document}